\documentclass[11pt]{article}
\usepackage[T1]{fontenc}
\usepackage{lmodern,microtype}
\usepackage[margin=2.5cm]{geometry}
\usepackage{amsmath,amssymb,amsthm}
\usepackage{graphicx,flafter}
\usepackage{placeins}
\usepackage[authoryear,round]{natbib}
\usepackage[hidelinks]{hyperref}
\newcommand{\E}{\mathbb E}
\newcommand{\Law}{\mathcal L}
\newcommand{\Unif}{\operatorname{Unif}}
\newcommand{\cx}{\leq_{\mathrm{cx}}}
\newcommand{\1}{\mathbf 1}
\newtheorem{theorem}{Theorem}
\newtheorem{corollary}{Corollary}
\newtheorem{claim}{Claim}
\theoremstyle{definition}

\hypersetup{
pdftitle={Two-Sided Plausibility Constraints in Persuasion},
pdfauthor={Constantine Sorokin}}
\title{\vspace{-1em}Two-Sided Plausibility Constraints in Persuasion}
\author{Constantine Sorokin\thanks{I thank Dirk Bergemann, Tibor Heumann, and Stephen Morris
for suggesting that two-sided posterior-mean constraints might fail to capture
the receiver's initial information. I also thank Eyal Winter, Andriy Zapechelnyuk,
and Fedor Sandomirskiy for fruitful discussions and useful comments.
\newline The idea behind the discrete example in Section~\ref{sec:discrete-example} is due to
Tibor Heumann, whom I thank for sharing it in private communication.}\\[3pt]\small University of Glasgow\\[2pt]\small\href{mailto:Constantine.Sorokin@glasgow.ac.uk}{Constantine.Sorokin@glasgow.ac.uk}}
\date{October 9, 2026}

\begin{document}
\maketitle

\begin{abstract}
A sender with commitment provides additional information to an already informed receiver. Final posterior means must be more dispersed than initial means and less dispersed than the state, in convex order. We characterize when these two plausibility constraints are also sufficient, so that every distribution satisfying them can be induced. If the state has a positive density, they are sufficient exactly under interval-partition information: the receiver learns which interval of a partition contains the state and nothing more, where an interval may be a single point. Otherwise, a nonempty open set of distributions satisfying both constraints cannot be induced, even when the sender observes the receiver's information. We use the sufficient case to derive an optimal symmetric disclosure rule in a second-price auction where bidders initially know valuation intervals.
\end{abstract}

{\small\noindent\textit{JEL:} D82, D83, D44.\par
\noindent\textit{Keywords:} Bayesian persuasion; informed receiver; convex order;
martingale coupling; auctions.\par}

\section{Introduction}\label{sec:introduction}

A receiver's initial information restricts what further persuasion can
achieve. We study a sender who commits to a message rule about an uncertain state.
The sender may observe the receiver's initial signal and condition the
message on it.
The receiver knows the rule and combines each message with her initial
information. We ask which distributions of her \emph{posterior mean}---the
expected state given her information---the sender can induce. When the
receiver's payoff from each action is affine in the state, her posterior
mean suffices to evaluate those actions. In our auction application,
it determines her bid.

For a receiver who initially knows only the prior distribution of the
state, a single upper bound describes all feasible distributions of
posterior means. A feasible distribution must be less dispersed than the state
distribution in \emph{convex order}: it has the same mean and gives no
larger expectation to any convex function. By \citet{Strassen}, every
distribution satisfying this bound can be generated by some message
rule. The sender can therefore choose a distribution of posterior means
without first constructing the rule that implements it
\citep{GentzkowKamenica2016}.

An informed receiver retains what she already knows, so final posterior
means must also be at least as dispersed as initial posterior means.
It is tempting to add this lower bound and optimize over distributions
between the two bounds. We show that the bounds describe the feasible set
if and only if the receiver initially learns which interval of a
partition contains the state, and nothing more; an interval may be a
single point, in which case she knows the state. Outside this class,
the bounds admit unattainable distributions, and the failure persists
under small changes to the target.

The missing restriction is the receiver's conditional distribution of
the state after each initial observation. If an observation rules out
states above a given threshold, further disclosure cannot produce a
posterior mean above that threshold
(Section~\ref{sec:discrete-example} gives an example).
The distribution of initial means need not record this restriction: it
says how often each mean occurs, without specifying which states a
receiver with that mean considers possible. Each bound alone can be
represented by a joint distribution: one links initial and final means,
the other final means and states. In each case, the more dispersed
variable has conditional mean equal to the less dispersed one.
Combining these joint distributions need not reproduce the receiver's
actual joint distribution of initial means and states. Such a
construction would describe a different initial experiment.

Feasible disclosure can always be described separately after each initial
observation: the receiver's belief at that observation plays the role of
the prior, and the single upper bound applies. For a fixed payoff function of the posterior mean, the sender
can also optimize these conditional problems separately. Averaging their
outcomes gives the unconditional distribution of final means. Describing
that distribution directly, without solving the conditional problems,
is especially useful when disclosure affects
other agents who do not observe the receiver's information. In a
second-price auction, for example, further disclosure to one bidder
changes the distribution of the bid her competitors face. Revenue depends on how bids drawn from
the unconditional distributions compare.

To state the results, let $\theta$ be the state and $X$ the receiver's initial
posterior mean. The initial experiment is specified by a fixed joint law
$\pi$ of $(X,\theta)$ on $[a,b]^2$, where $a<b$ are finite and
$\E_\pi[\theta\mid X]=X$. The receiver observes $X$, and the sender may
randomize a message $M$ conditional on the realized pair $(X,\theta)$. The
receiver uses both observations, so her final posterior mean is
$Y=\E[\theta\mid X,M]$. Write
$F=\Law(\theta)$ for the prior, $G_0=\Law(X)$ for the distribution of
initial posterior means, and $G=\Law(Y)$ for the distribution of final
posterior means, where $\Law$ denotes probability law. A target
distribution $G$ is \emph{implementable} if some such message rule
produces it; let $\mathcal I(\pi)$ be the set of implementable
distributions. Denoting convex order by $\cx$, the necessary bounds are
\begin{equation}\label{eq:band}
\underbrace{G_0\cx G}_{\text{final information includes }X}
\qquad\text{and}\qquad
\underbrace{G\cx F}_{\text{feasible from the prior alone}}.
\end{equation}
We call these bounds the \emph{two-sided plausibility constraints}.
They are posterior-mean implications of Bayes plausibility: final beliefs
must average back to initial beliefs, which in turn average to the prior.
Together they define the \emph{two-sided band}
$\mathcal B(G_0,F)=\{G:G_0\cx G\cx F\}$, which contains
$\mathcal I(\pi)$. The two bounds are sufficient when the converse also
holds: every distribution in the band is implementable, so that
$\mathcal I(\pi)=\mathcal B(G_0,F)$.

One form of initial information makes the two bounds sufficient. Choose
at most countably many disjoint open intervals $I_j\subseteq(a,b)$, each
with positive probability. The initial law $\pi$ generates \emph{interval-partition
information} if, $\pi$-almost surely,
\begin{equation}\label{eq:partition}
X=\E_\pi[\theta\mid \theta\in I_j]\quad\text{if }\theta\in I_j,
\qquad
X=\theta\quad\text{if }\theta\notin\textstyle\bigcup_j I_j .
\end{equation}
The receiver learns which interval contains the state and nothing more,
and learns the state itself outside all the intervals. The class includes
both extremes: with no intervals she knows the state, and with the single
interval $(a,b)$ she knows only the prior. Within each disclosed interval,
her belief is the prior conditioned on that interval.
The upper bound alone describes feasible distributions
there, as for an initially uninformed receiver. The two bounds ensure
that a target preserves each interval's probability and mean, so these
conditional choices can be combined to produce every target in the band.
The first theorem shows that no other initial experiment has this property.

\begin{theorem}[When the two bounds are sufficient]\label{thm:main}
Suppose that $F$ has a density $f$ with $f>0$ Lebesgue-almost
everywhere on $(a,b)$. Then $\mathcal I(\pi)=\mathcal B(G_0,F)$ if and only if
the fixed initial joint law $\pi$ generates interval-partition information
as in \eqref{eq:partition}. For every other initial experiment, the band
contains an absolutely continuous law that is not implementable.
\end{theorem}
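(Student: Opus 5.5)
The plan is to work throughout with the potential functions $u_H(s)=\int|s-z|\,dH(z)$. For laws with the common mean, $H_1\cx H_2$ iff $u_{H_1}\le u_{H_2}$ pointwise, with equality outside $[a,b]$. The basic reduction is this: a law $G$ is implementable iff it can be written as $G=\int G_x\,dG_0(x)$ with $G_x\cx\pi_x$ for $G_0$-a.e.\ $x$, where $\pi_x=\Law_\pi(\theta\mid X=x)$. Indeed, conditional on $X=x$, the receiver's problem is persuasion from the prior $\pi_x$, so Strassen's theorem applies observation by observation, and a measurable selection glues the conditional message rules together.

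\emph{Sufficiency.} Suppose $\pi$ generates interval-partition information, and let $m_j$ be the conditional mean on $I_j=(\ell_j,r_j)$. Then $G_0$ is $F$ off $\bigcup_j I_j$, plus atoms $F(I_j)$ at $m_j$. Replacing $F|_{I_j}$ by an atom of equal mass and mean leaves the potential unchanged outside $I_j$. Hence $u_{G_0}=u_F$ on $C=[a,b]\setminus\bigcup_jI_j$. Take $G$ in the band; then $u_{G_0}\le u_G\le u_F$ forces $u_G=u_F$ on $C$.

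I then show that this identity pins down $G$ on $C$ and on each $I_j$.
\begin{itemize}
\item At each endpoint $s\in C$, $u_F$ is differentiable because $F$ is atomless. Since $u_G$ is a convex function touching $u_F$ from below there, $u_G'(s)=u_F'(s)$ and $G$ has no atom at $s$. Because $u_G'=2G-1$, this gives $G=F$ on $C$, and in particular $G(\{s\})=0$.
\item Equality of values and one-sided derivatives at $\ell_j,r_j$ gives $G(I_j)=F(I_j)$ and equal means on $I_j$.
\item On $I_j$, the potentials of the normalized restrictions differ from $u_G,u_F$ by the same affine function. So $G|_{I_j}\cx F|_{I_j}$ after normalization.
\end{itemize}
Strassen inside each interval, applied conditional on $X=m_j$, together with no message when $X=\theta$, then implements $G$. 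The only delicate point is the derivative-matching argument on a closed set $C$ that may have empty interior. I would handle it through the distribution functions directly: $G=F$ on $C$ follows from $u_G'=2G-1$ agreeing with $u_F'$ at every point of $C$.

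\emph{Necessity.} Fix a threshold $t\in(a,b)$. Suppose an implementable $G$ satisfies $u_G(t)=u_F(t)$, that is, $\E(Y-t)^+=\E(\theta-t)^+$. By Jensen applied to $(\cdot-t)^+$, the posterior given $(X,M)$ must then be a.s.\ supported on one side of $t$. So $\sigma(X,M)$ reveals $\1\{\theta<t\}$, and hence $G\succeq_{\mathrm{cx}} H_t:=\Law(\E_\pi[\theta\mid X,\1\{\theta<t\}])$. On the band side, consider the smallest convex $w_t\ge u_{G_0}$ with $w_t(t)=u_F(t)$ and $w_t\le u_F$. This is the maximum of $u_{G_0}$ and the two lines through $(t,u_F(t))$ with slopes $u_F'(t)$; it is admissible because $u_F$ is convex. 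The law $K_t$ with potential $w_t$ lies in the band. So if $u_{H_t}\not\le w_t$ for some $t$, then $K_t$ is in the band but not implementable.

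The core claim, and the main obstacle, is the converse: $u_{H_t}\le w_t$ for all $t$ forces interval-partition information, given $f>0$ a.e. My plan for this step is as follows.
\begin{itemize}
\item Let $T=\{t:u_{G_0}(t)<u_F(t)\}$, an open set whose components will be the candidate intervals $I_j$. On $[a,b]\setminus T$, touching gives $G_0=F$ there. Combined with $\E[\theta\mid X]=X$, this should yield $X=\theta$ off $T$.
\item For $t$ inside a component $(\ell,r)$, the inequality $u_{H_t}\le w_t$ near $\ell$ and $r$ should show that no $\pi_x$ with $x\in(\ell,r)$ can split nontrivially at any $t$ unless all such $x$ coincide. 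The mechanism is that a split raises $u_{H_t}$ above the kink of $w_t$, and $f>0$ rules out flat stretches of $u_F$ where this could be absorbed. Hence $X$ is constant on $\{\theta\in(\ell,r)\}$, which is exactly \eqref{eq:partition}.
\end{itemize}
I expect this case analysis, in particular ruling out overlapping supports of distinct $\pi_x$, to be the hardest part.

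\emph{Absolute continuity.} Finally, $K_t$ may have atoms. I would replace it by $(1-\varepsilon)K_t+\varepsilon F$, perturbed further to keep $u(t)=u_F(t)$, or alternatively use a mean-preserving smoothing of $K_t$ inside the region where $w_t<u_F$. The inequality $u_{H_t}\not\le u$ is strict at some point, so it survives small perturbations. This also delivers the nonempty open set of non-implementable band elements.
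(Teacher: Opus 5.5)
\paragraph{The converse is announced, not proved.} Your sufficiency argument is the paper's: no-crossing at contact points, then Strassen inside each interval. Your reduction for necessity is also valid. If $u_G(t)=u_F(t)$ for an implementable $G$, then $\sigma(X,M)$ reveals $\1\{\theta<t\}$, so $H_t\cx G$, and $u_{H_t}\not\le w_t$ makes $K_t$ an excluded element of the band.

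The gap is the step that carries the whole converse: that $u_{H_t}\le w_t$ for all $t$ forces interval-partition information. You only state it, and the mechanism you suggest cannot be right. Under interval-partition information, $\pi_{k_j}=F_j$ splits nontrivially at every $t\in I_j$, yet $u_{H_t}=w_t$.

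What does hold is $u_{H_t}\ge w_t$ for every $t$, since $H_t$ refines $G_0$ and does not cross $t$. So your criterion is really the equality $H_t=K_t$. Let $\alpha_t<t<\beta_t$ be the points where the tangent to $u_F$ at $t$ meets $u_{G_0}$. No-crossing at these two points shows that almost every $x\in(\alpha_t,\beta_t)$ must split at $t$ with conditional means exactly $\alpha_t$ and $\beta_t$. Almost every $x\notin(\alpha_t,\beta_t)$ must not split at all.

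To finish your route you would still need a rigidity result. These common-endpoint conditions hold for a continuum of thresholds, with $t$-dependent null sets to control. You must show they allow only one initial mean per uncertainty interval. One possible way is to show that two distinct means satisfying them over an interval of $t$ force $\alpha_t$ to be locally constant, which strict convexity of $u_F$ rules out. None of this is in the proposal.

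The paper does this step differently. It uses pooling targets $G_K$ near $F$ and derives $\int_K(\theta-m_K)w_\pi f\,d\theta=0$ from $\E[X(\theta-Y)]=0$. Differentiating in the endpoints makes $w_\pi=\E_\pi[X\mid\theta]$ constant on each uncertainty interval, and then $\E[(X-k)^2\1_B]=0$.

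\paragraph{Absolute continuity.} This step is also incomplete. The mixture $(1-\varepsilon)K_t+\varepsilon F$ keeps every atom of $K_t$: those at $\alpha_t,\beta_t$ and any atoms or singular part inherited from $G_0$. It is therefore not absolutely continuous. A smoothing that stays in the band must remove all of these, including mass of $G_0$ accumulating at contact points where $u_F-w_t\to0$, and you do not show this can be done. You would then still need either to preserve $u(t)=u_F(t)$ or to invoke closedness of $\mathcal I(\pi)$ to carry the exclusion over.

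The paper sidesteps all of this. $G_K$ differs from $F$ only by a single atom, and the explicit contractions $G_{K,\varepsilon}$ are absolutely continuous. Closedness of $\mathcal A(\pi)$ (Claim~\ref{clm:closedness}) passes the hypothesis to the limit, so the contrapositive directly yields an absolutely continuous excluded target.
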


The main step is the converse: if every target in the band is
implementable, the initial information must be an interval partition.
Consider a target in the band that pools a short interval of states at
the prior's conditional mean on that interval and reveals all other states.
Any implementation must pool those same states. A receiver who gets the
pooling message evaluates the pooled states with her own belief, but the
target requires her final posterior mean to equal the prior's mean of
the interval. Every receiver who can get that message must therefore
agree with the prior about the interval's mean. If this agreement holds
for every such pooling target, the initial information must be an
interval partition.
The impossibility extends to richer initial signals, even when the sender
observes them (Section~\ref{sec:richer}).

Call a target \emph{excluded} if it belongs to the band but cannot be
implemented. Exclusion is robust: a sufficiently small change within
the band leaves the target excluded. In particular, mixing an excluded
target with any other target in the band keeps it excluded when the
weight on the other target is small enough, and one bound on that weight
works for every other target.

\begin{theorem}[Robustness of exclusion]
\label{thm:excluded-set}
Under the assumptions of Theorem~\ref{thm:main}, suppose $\pi$ does not
generate interval-partition information. For every
$G_*\in\mathcal B(G_0,F)\setminus\mathcal I(\pi)$, there is
$\bar s\in(0,1]$ such that
\[
(1-s)G_*+sG
\in\mathcal B(G_0,F)\setminus\mathcal I(\pi)
\qquad\text{for all }G\in\mathcal B(G_0,F),\quad 0<s<\bar s.
\]
For each such $s$, these mixtures form a copy of the band shrunk toward
$G_*$ by the factor $s$, and every target in this copy is excluded. Moreover,
$\mathcal B(G_0,F)\setminus\mathcal I(\pi)$ is a nonempty, relatively
open subset of $\mathcal B(G_0,F)$ in the weak topology, and it has
positive probability under every Borel probability distribution on
$\mathcal B(G_0,F)$ whose support is $\mathcal B(G_0,F)$.
\end{theorem}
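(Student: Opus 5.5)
The plan is to show that $\mathcal I(\pi)$ is a convex, weakly closed subset of the convex, weakly compact set $\mathcal B(G_0,F)$. Once that is established, a separating-hyperplane argument gives the uniform bound $\bar s$. All other claims then follow from topology and from Theorem~\ref{thm:main}.

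\emph{Step 1: reformulation and convexity.} A target $G$ is implementable if and only if there is a joint law $\gamma$ of $(X,\theta,Y)$ on $[a,b]^3$ with three properties: its $(X,\theta)$-marginal is $\pi$, its $Y$-marginal is $G$, and $\E_\gamma[\theta\mid X,Y]=Y$. One direction takes $M=Y$ as the message. The other uses the tower property: if $Y=\E[\theta\mid X,M]$, then $\E[\theta\mid X,Y]=Y$. The martingale condition is equivalent to
\[
\E_\gamma\bigl[(\theta-Y)h(X,Y)\bigr]=0\qquad\text{for all continuous } h \text{ on } [a,b]^2 .
\]
These conditions are linear in $\gamma$, so the set $\Gamma$ of admissible $\gamma$ is convex. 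Its image $\mathcal I(\pi)$ under the $Y$-marginal map is therefore convex as well. Operationally, this is the sender flipping an independent coin between two message rules. The band is convex because convex-order inequalities are linear in $G$.

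\emph{Step 2: closedness.} Since $[a,b]^3$ is compact, $\Gamma$ is tight. Each of the conditions in Step 1 is an equation $\int \psi\,d\gamma=c$ with $\psi$ bounded and continuous, so $\Gamma$ is weakly closed, hence weakly compact. The $Y$-marginal map is weakly continuous, so $\mathcal I(\pi)$ is weakly compact. I expect this to be the main point to get right. The key is to encode $\E[\theta\mid X,Y]=Y$ through continuous test functions, which is legitimate because continuous $h$ determine conditional expectations on a compact metric space. Working directly with message rules would not give closedness.

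\emph{Step 3: uniform mixing bound.} Fix $G_*\in\mathcal B(G_0,F)\setminus\mathcal I(\pi)$. Work in the space of finite signed measures on $[a,b]$ with the weak topology, whose dual is $C[a,b]$. Hahn--Banach separation of the point $G_*$ from the compact convex set $\mathcal I(\pi)$ gives a continuous $\varphi$ and a $\delta>0$ with
\[
\int\varphi\,dG_*\;\ge\;\sup_{H\in\mathcal I(\pi)}\int\varphi\,dH+\delta .
\]
Set $\bar s=\min\{1,\delta/(2\|\varphi\|_\infty)\}$. For any $G$ in the band and any $0<s<\bar s$,
\[
\int\varphi\,d\bigl((1-s)G_*+sG\bigr)\;\ge\;\int\varphi\,dG_*-2s\|\varphi\|_\infty\;>\;\sup_{\mathcal I(\pi)}\int\varphi .
\]
So the mixture is not implementable, and it lies in the band by convexity. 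The map $G\mapsto(1-s)G_*+sG$ is an affine bijection of the band onto its image, which is the shrunk copy described in the statement.

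\emph{Step 4: the remaining claims.} Nonemptiness follows from Theorem~\ref{thm:main}. Relative openness holds because $\mathcal B(G_0,F)\setminus\mathcal I(\pi)$ is the complement, within the band, of the closed set $\mathcal I(\pi)$. Finally, the band is compact metrizable, for example under the Lévy--Prokhorov metric. A Borel probability measure whose support is the whole band gives positive mass to every nonempty relatively open subset, and in particular to the excluded set.
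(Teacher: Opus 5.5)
Your proof is correct. Steps~1, 2 and~4 coincide with the paper's Claim~\ref{clm:closedness} and the last paragraph of its proof; the difference lies in how you obtain $\bar s$. The paper argues by contradiction from closedness alone. If no $\bar s$ worked, there would be implementable mixtures $(1-s_n)G_*+s_nG_n$ with $s_n\to0$. Since $\bigl|\int h\,d[(1-s_n)G_*+s_nG_n-G_*]\bigr|\le 2s_n\|h\|_\infty$ for every continuous $h$, these mixtures converge weakly to $G_*$, contradicting closedness of $\mathcal I(\pi)$. That route needs neither convexity nor a separation theorem, but it produces no number. Your Hahn--Banach route also uses convexity of $\mathcal I(\pi)$, which the paper proves anyway and uses for Corollary~\ref{cor:value}. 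In return, it gives an explicit $\bar s=\delta/(2\|\varphi\|_\infty)$ from any separating continuous function and its margin. This is the same mechanism the paper applies by hand in Section~\ref{sec:verify-discrete}. There one can take $\varphi=-\psi$ and $\delta=3/700$. Replacing $2\|\varphi\|_\infty$ by the oscillation $\sup\varphi-\inf\varphi=1/4$, which your estimate permits, recovers the stated bound $3/175$.
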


The excluded set thus contains a neighborhood of each of its targets
within the band. If targets are drawn from any distribution that gives
positive probability to every nonempty open subset of the band, the
chance of drawing an excluded target is positive.
The discrete example in Section~\ref{sec:discrete-example} gives an
explicit bound on the mixing weight.

For optimization, the band therefore gives an exact constraint under
interval-partition information and a relaxation otherwise. A maximizer
of the relaxed problem solves the sender's problem whenever it is
implementable; outside interval partitions, this must be checked against
the initial experiment (Section~\ref{sec:optimization}). An auction
application illustrates the sufficient case: when bidders initially know
valuation intervals, an optimal symmetric rule reveals low valuations
and pools high valuations within each interval.

Section~\ref{sec:literature} relates these results to the literature.
Section~\ref{sec:implementation} completes the model, describes
implementation separately for each initial posterior mean, and extends
the failure to receivers with richer initial information.
Section~\ref{sec:examples} develops the two examples, and
Section~\ref{sec:auctions} applies the sufficient case to auction disclosure.
Section~\ref{sec:discussion} discusses the broader implications, and
Section~\ref{sec:proof} contains the proofs.

\section{Related literature and contribution}\label{sec:literature}

\citet{Blackwell} ranks experiments by garbling, which reduces the
dispersion of posterior means in convex order. \citet{Strassen} shows
that $\tau\cx\tau'$ holds precisely when there is a \emph{martingale
coupling}: a joint distribution with marginals $\tau,\tau'$ in which
the conditional expectation of the second variable given the first
equals the first. Applied to~\eqref{eq:band}, this gives separate
couplings of $(X,Y)$ and $(Y,\theta)$. Our question is when they can
be combined while preserving the fixed law $\pi$ of $(X,\theta)$.

The closest result is the characterization of unique martingale
couplings in \citet[Theorem~2.5 and Remark~2.6]{NutzWangZhang}.
Under our assumptions, their result implies that $G_0$ and $F$ uniquely
determine a joint law satisfying $\E[\theta\mid X]=X$ exactly under
interval-partition information. This gives sufficiency in
Theorem~\ref{thm:main}: combining the two couplings from Strassen's
theorem must reproduce that unique law. Our contribution is necessity.
If every target in the band can be implemented while preserving $\pi$,
then $\pi$ must have the interval-partition form. Targets with densities
already suffice to establish this conclusion, even under the weaker
constraints that apply to richer initial signals
(Section~\ref{sec:richer}).

\citet[Propositions~2.3--2.4]{Sester} asks when prescribing an
intermediate marginal leaves a martingale transport optimum unchanged.
There the joint law of the first and last variables can be chosen among
optimal couplings. Here it is fixed by the receiver's initial experiment.

In persuasion, \citet{KamenicaGentzkow} formulate the sender's choice
over distributions of posterior beliefs, meaning conditional
distributions of the state, subject to Bayes plausibility: the beliefs
must average to the prior. When payoffs depend on posterior means,
\citet{DworczakMartini} and \citet{KleinerMoldovanuStrack} develop
methods for optimizing over distributions satisfying convex-order
constraints. With an informed receiver, these methods apply separately
at each initial posterior mean, using the receiver's conditional belief
as the prior. This does not require interval-partition information.
Our results identify when the two bounds alone describe the distributions
obtained by combining these conditional choices. This matters, for
example, when auction revenue depends on the resulting bid distribution
(Section~\ref{sec:auctions}).

A related literature studies persuasion when the sender cannot observe
the receiver's private information. \citet{KolotilinMylovanovZapechelnyukLi}
compare mechanisms that condition disclosure on the receiver's reported
preference type with experiments that do not use such reports.
\citet{Kolotilin2018} develops a linear
programming approach to optimal disclosure with a privately informed
receiver, and \citet{GuoShmaya} characterize disclosure that induces
acceptance of a project when the receiver has private information about the state.
In our problem, the sender may observe the receiver's information and
condition the message on it, so the failure of the two bounds does not
stem from uncertainty about what the receiver knows.

Interval partitions also appear in \citet{Censorship}
and \citet{MonotonePersuasion} as experiments chosen by the sender; in the
terminology of \citet{MonotonePersuasion}, interval-partition information
is a monotone signal. Here such a partition describes what the receiver
already knows, and the sender chooses what further information to
provide.

\citet[Section~4.4]{AugiasUhe} solve persuasion problems with exogenous
lower and upper bounds on informativeness. These bounds can represent a
minimum disclosure requirement and a limit on what can be revealed.
Their optimization ranges over distributions between two convex-order
bounds. The present results
identify when their problem with the lower bound $G_0$ and the upper
bound $F$ is also the problem of a sender facing a receiver with the
initial experiment $\pi$: exactly under interval-partition information.
Otherwise, their solution bounds the sender's value from above and is
optimal whenever it is implementable (Section~\ref{sec:optimization}).

When the receiver's full belief, rather than her posterior mean,
describes her information, the difficulty studied here does not arise.
\citet[p.~647]{LiNorman} use the fact that any mean-preserving spread of
the current distribution of beliefs can be induced by further disclosure,
so that each sender's problem in sequential persuasion separates belief
by belief. \citet{LeskelaVihola} prove a conditional version of
Strassen's theorem with measurable couplings; it supplies the description
of implementation separately for each initial posterior mean in
Section~\ref{sec:implementation}. Theorem~\ref{thm:main} identifies when
that description can be replaced by the two bounds on distributions of
posterior means.

\section{Model and information constraints}\label{sec:implementation}

We now derive the exact restrictions on implementation, conditional on
each initial posterior mean. Theorem~\ref{thm:main} identifies when the
two bounds capture these restrictions. Retain the fixed initial law $\pi$ and notation from the
introduction. All probability laws are Borel laws, and $\E_\pi$ denotes
expectation under $\pi$. The sender's message rule can randomize freely
conditional on $(X,\theta)\in[a,b]^2$. Formally, the rule is a measurable
probability kernel from $[a,b]^2$ to a standard Borel message space. For probability laws
$\tau,\tau'$ on $[a,b]$, write $\tau\cx\tau'$ when
$\int\varphi\,d\tau\leq\int\varphi\,d\tau'$ for every continuous convex
$\varphi:[a,b]\to\mathbb R$; affine functions of both signs give equality
of means. A law is absolutely continuous if it has a density with respect
to Lebesgue measure, and atomless if it assigns zero probability to every
point. Sets of laws carry the weak topology, in which
$G_n\to G$ when $\int h\,dG_n\to\int h\,dG$ for
every continuous $h:[a,b]\to\mathbb R$. Throughout the main results,
$F$ has a density $f$ with $f>0$ almost everywhere on $(a,b)$; the density
need not be continuous, and $G_0$ may have atoms.

Recall that the sender may condition the message on $X$ as well as on
$\theta$. This extra conditioning does not affect the characterization. Under interval-partition
information, the construction in Section~\ref{sec:sufficiency} uses only
the state. Otherwise, restricting what the sender observes can only
shrink $\mathcal I(\pi)$, so excluded targets remain excluded.

\subsection{Implementation for each initial posterior mean}

A target $G$ is implementable if and only if there is a triple
$(X,Y,\theta)$ in which $(X,\theta)$ has law $\pi$, $Y$ has law $G$, and
\begin{equation}\label{eq:implementation}
\E[Y\mid X]=X,\qquad \E[\theta\mid X,Y]=Y.
\end{equation}

The first identity says that, at each initial posterior mean, final
means average back to it. The second requires $Y$ to be the expected
state given both $X$ and $Y$. If $Y$ is the final posterior mean of an
implementation, both identities follow from iterated expectations.
Conversely, drawing
a message from the conditional law of $Y$ given $(X,\theta)$ and sending
$Y$ itself implements the target: the second identity makes the
posterior mean equal to the message.
Conditional Jensen gives both inequalities in~\eqref{eq:band}.

These requirements can be imposed separately for each initial posterior
mean. Let $F(\cdot\mid x)=\Law(\theta\mid X=x)$ be the receiver's belief
about the state after observing $x$. The distribution of final posterior
means at that observation must be below $F(\cdot\mid x)$ in convex order.
Consequently, a target $G$ is implementable if and only if there
is a measurable probability kernel
$x\mapsto G(\cdot\mid x)$ such that
\begin{equation}\label{eq:conditional-feasibility}
G=\int G(\cdot\mid x)\,G_0(dx),\qquad
G(\cdot\mid x)\cx F(\cdot\mid x)\quad\text{for }G_0\text{-almost every }x.
\end{equation}
Here $G(\cdot\mid x)$ is the distribution of final posterior means conditional
on $X=x$, and integrating over $G_0$ combines these conditional
distributions into the target $G$. Necessity follows by
conditioning~\eqref{eq:implementation} on $X$. For sufficiency, the
conditional martingale-coupling theorem of
\citet[Theorem~1.3]{LeskelaVihola} supplies measurable martingale
couplings of $G(\cdot\mid x)$ and $F(\cdot\mid x)$; integrating them against $G_0$
gives a triple in which $(X,\theta)$ has law $\pi$
and~\eqref{eq:implementation} holds. The question in
Theorem~\ref{thm:main} is when these conditional restrictions can be
replaced by the two inequalities~\eqref{eq:band}, which involve only the
distributions $G_0$, $G$, and $F$.

\subsection{Richer initial information}\label{sec:richer}

The receiver's initial information may be richer than her posterior
mean. Suppose that she observes a signal $S$ with values in a standard
Borel space, that $X=\E[\theta\mid S]$, and that $(X,\theta)$ has law $\pi$ as
before. The sender's message may depend on $(S,\theta)$, and the final
posterior mean is $Y=\E[\theta\mid S,M]$. Because $X$ is a function of $S$
and $Y$ is a function of $(S,M)$, iterated expectations give
\begin{equation}\label{eq:weak}
\E[Y\mid X]=X,\qquad \E[\theta\mid Y]=Y.
\end{equation}

The second identity is weaker than its counterpart in~\eqref{eq:implementation}: it
averages over different initial means that lead to the same final mean.
These two identities remain necessary whatever else the receiver knows.
Outside interval partitions, the proof of Theorem~\ref{thm:main} finds
an absolutely continuous target in the band for which no triple with
initial law $\pi$ satisfies even~\eqref{eq:weak}. That target is therefore
excluded under richer information too. The positive result continues
to apply when the conditional law of $\theta$ given $S$ depends on $S$
only through $X$, so that the extra information does not change beliefs
about the state.

\subsection{Atomless initial posterior means}

An atomless distribution of initial means, as arises from observing the
state plus independent normal noise, rules out interval-partition
information unless the state is already known. Each nondegenerate
disclosed interval puts mass $F(I_j)>0$ on its conditional mean.
Under interval-partition information, an atomless $G_0$ therefore permits
no such interval and requires $X=\theta$.

\begin{corollary}[Atomless initial means]\label{thm:impossibility}
Under the assumptions of Theorem~\ref{thm:main}, if $G_0$ is atomless
and $G_0\ne F$, the band contains an absolutely continuous law that is
not implementable. This law cannot be produced even when the receiver's
initial information is a richer signal $S$ as in
Section~\ref{sec:richer} and the sender observes $S$.
\end{corollary}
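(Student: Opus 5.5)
The corollary reduces to Theorem~\ref{thm:main}, together with the remark in Section~\ref{sec:richer} that the excluded target built in its proof violates even the weaker identities~\eqref{eq:weak}. The plan has three steps: show that an atomless $G_0$ with $G_0\ne F$ cannot come from interval-partition information, invoke Theorem~\ref{thm:main} to obtain an excluded absolutely continuous target, and then pass to richer signals.

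\emph{Step 1: $\pi$ is not an interval partition.} Suppose, toward a contradiction, that $\pi$ generates interval-partition information with intervals $I_j$. Each $I_j$ has $F(I_j)>0$. On the event $\{\theta\in I_j\}$, \eqref{eq:partition} gives $X=\E_\pi[\theta\mid\theta\in I_j]$, a single point. Hence $G_0$ has an atom of mass at least $F(I_j)>0$ there, contradicting atomlessness. So the family of intervals is empty, and \eqref{eq:partition} forces $X=\theta$ $\pi$-almost surely. This gives $G_0=\Law(X)=\Law(\theta)=F$, contradicting $G_0\ne F$. Therefore $\pi$ does not generate interval-partition information.

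\emph{Step 2: existence of an excluded target.} Since $F$ has a density that is positive almost everywhere on $(a,b)$, the hypotheses of Theorem~\ref{thm:main} hold. Its final sentence then supplies an absolutely continuous law $G$ in $\mathcal B(G_0,F)$ that is not implementable. This proves the first claim.

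\emph{Step 3: richer signals.} The argument is robust only if the target from Theorem~\ref{thm:main} is the specific one described in Section~\ref{sec:richer}. That target admits no triple $(X,Y,\theta)$ with $(X,\theta)\sim\pi$, $Y\sim G$, and~\eqref{eq:weak}. I would quote that statement, which the paper says the proof of Theorem~\ref{thm:main} establishes. Now take any richer signal $S$ with $X=\E[\theta\mid S]$ and any message rule depending on $(S,\theta)$, and set $Y=\E[\theta\mid S,M]$. Iterated expectations, as in Section~\ref{sec:richer}, give $\E[Y\mid X]=X$ and $\E[\theta\mid Y]=Y$. So if $Y$ had law $G$, this triple would satisfy~\eqref{eq:weak}, a contradiction. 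Hence $G$ cannot be produced even when the sender observes $S$.

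The only delicate point is Step~3's reliance on the proof of Theorem~\ref{thm:main}, not merely its statement. Until Section~\ref{sec:proof} is available, I would treat the exclusion under~\eqref{eq:weak} as the property asserted in Section~\ref{sec:richer}. The intended mechanism there is a pooling target on a short interval. It forces every receiver who can receive the pooling message to agree with the prior's conditional mean on that interval, and this agreement uses only $\E[\theta\mid Y]=Y$ together with $\E[Y\mid X]=X$. Steps~1 and~2 are routine.
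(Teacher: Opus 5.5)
Your proposal is correct and follows the paper's proof. You rule out interval-partition information because each disclosed interval would put an atom at its conditional mean, and with no intervals $X=\theta$ forces $G_0=F$; you then take the absolutely continuous target from the necessity argument, which lies outside $\mathcal A(\pi)$ and not merely outside $\mathcal I(\pi)$, as you rightly insist, so it stays excluded under a richer signal $S$, since any implementation would yield a triple satisfying \eqref{eq:weak}.
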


\subsection{Implication for optimization}\label{sec:optimization}

Optimizing over the band gives an upper bound on the sender's value for
any objective that evaluates the distribution of final posterior means.
Let $W(G)$ be such an objective,
continuous in the weak topology. Examples include the expected payoff
$W(G)=\int U\,dG$ for a continuous function $U$, and the auction revenue
in Section~\ref{sec:auctions}.\footnote{For a fixed number of bidders, the
joint law of independent bids varies weakly continuously with their common
law $G$. Revenue is weakly continuous because the second-highest bid is a
bounded continuous function of the bids.}
Every implementable target lies in the band,
so
\[
\max_{G\in\mathcal I(\pi)}W(G)
\leq\max_{G\in\mathcal B(G_0,F)}W(G).
\]
Both maxima exist: the band is weakly compact, as is the implementable
set (Claim~\ref{clm:closedness}). If a maximizer $G^{\mathrm{opt}}$ over the
band is implementable, it is also optimal over the implementable set.
By~\eqref{eq:conditional-feasibility}, implementation requires conditional
laws $G(\cdot\mid x)$ that lie below $F(\cdot\mid x)$ in convex order
and average to $G^{\mathrm{opt}}$. Under interval-partition information,
every target in the band meets these requirements, so optimization over
the band solves the original problem.

For an expected payoff $\int U\,dG$, optimization can also proceed
separately at each initial mean, using its conditional state distribution.
An objective such as auction revenue need not separate this way. Yet
even for an ordinary expected payoff, replacing the conditional
constraints by the band can raise the optimal value. The next corollary
shows that some continuous payoff has a strict gap whenever the initial
experiment is not an interval partition. The gap need not arise for
every payoff: affine payoffs have the same expectation throughout the band.

\begin{corollary}[When the band is exact for every continuous payoff]\label{cor:value}
Under the assumptions of Theorem~\ref{thm:main},
\[
\max_{G\in\mathcal I(\pi)}\int U\,dG
=\max_{G\in\mathcal B(G_0,F)}\int U\,dG
\quad\text{for every continuous }U:[a,b]\to\mathbb R
\]
if and only if $\pi$ generates interval-partition information.
\end{corollary}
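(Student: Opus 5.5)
The plan is to obtain the "if" direction from Theorem~\ref{thm:main}, and the "only if" direction by separating an excluded target from the implementable set with a continuous linear functional. For the "if" direction: under interval-partition information, Theorem~\ref{thm:main} gives $\mathcal I(\pi)=\mathcal B(G_0,F)$. The two maxima are therefore taken over the same set, and both exist by the weak compactness recorded in Section~\ref{sec:optimization}. Continuity of $G\mapsto\int U\,dG$ in the weak topology is immediate because $U$ is continuous on $[a,b]$.

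For the "only if" direction, suppose $\pi$ is not an interval partition. Theorem~\ref{thm:main} supplies a target $G_*\in\mathcal B(G_0,F)\setminus\mathcal I(\pi)$. I will show that $\mathcal I(\pi)$ is convex and weakly compact:
\begin{enumerate}
\item Compactness is Claim~\ref{clm:closedness}.
\item Convexity follows from the conditional characterization~\eqref{eq:conditional-feasibility}. If $G$ and $G'$ are implemented by kernels $G(\cdot\mid x)$ and $G'(\cdot\mid x)$, then the kernel $(1-s)G(\cdot\mid x)+sG'(\cdot\mid x)$ averages to $(1-s)G+sG'$. It remains below $F(\cdot\mid x)$ in convex order, because that order is defined by linear inequalities.
\end{enumerate}

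Now view laws on $[a,b]$ as elements of the dual of $C[a,b]$ with the weak-$*$ topology, which is locally convex and whose continuous linear functionals are exactly $\mu\mapsto\int U\,d\mu$ with $U\in C[a,b]$. The Hahn--Banach separation theorem, applied to the compact convex set $\mathcal I(\pi)$ and the point $G_*\notin\mathcal I(\pi)$, yields a continuous $U$ with
\[
\int U\,dG_*>\max_{G\in\mathcal I(\pi)}\int U\,dG .
\]
Since $G_*$ lies in the band, the maximum over the band is strictly larger than the maximum over $\mathcal I(\pi)$, so equality fails for this $U$.

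The main obstacle is verifying that $\mathcal I(\pi)$ is weakly closed, since Hahn--Banach needs a closed convex set. I rely on Claim~\ref{clm:closedness} for this, as the statement permits. If that claim were not available, I would argue directly as follows. Take implementable $G_n\to G$ with triples $(X,Y_n,\theta)$ satisfying~\eqref{eq:implementation}. The joint laws of these triples live on $[a,b]^3$, so they are tight and a subsequence converges. Conditions~\eqref{eq:implementation} are equivalent to the test-function identities
\[
\E[(Y-X)h(X)]=0,\qquad \E[(\theta-Y)k(X,Y)]=0
\]
for continuous $h$ and $k$, and these identities pass to the limit. The $(X,\theta)$ marginal stays $\pi$ throughout. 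So the limit law is a triple implementing $G$, and $\mathcal I(\pi)$ is closed.
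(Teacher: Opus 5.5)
Your proposal is correct and matches the paper's proof. The ``if'' direction follows immediately from Theorem~\ref{thm:main}, and the ``only if'' direction strictly separates an excluded target $G_*$ from the convex, weakly compact set $\mathcal I(\pi)$ by a continuous $U$; your separate convexity argument via the kernels in~\eqref{eq:conditional-feasibility} is valid, though Claim~\ref{clm:closedness} already records convexity together with compactness.
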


\section{Examples}\label{sec:examples}

The examples show why the two bounds can hold even though the receiver's
initial information rules out the target. In both, the state is uniform
on $[0,1]$. The first example has two possible initial posterior means;
the second has a continuous distribution of initial posterior means
and positive conditional state densities.

\subsection{Discrete initial means: disclosed and noisy categories}
\label{sec:discrete-example}

If the receiver knows only whether $\theta<1/2$, the initial means are $1/4$
and $3/4$, each with probability $1/2$. This is interval-partition
information. Within either half, the sender faces an uninformed receiver
with the uniform conditional prior on that half. Every target between
the two bounds can be obtained by choosing a feasible distribution
separately in each half and combining the two.

Now suppose the receiver instead observes an initial signal
$S\in\{L,H\}$ that gives imperfect information about the half containing
the state:
\[
\Pr(L\mid \theta)=
\begin{cases}
1,&0\leq \theta<2/5,\\
1/2,&2/5\leq \theta\leq3/5,\\
0,&3/5<\theta\leq1.
\end{cases}
\]
Both signal realizations have probability $1/2$, and their posterior
means are $x_L=13/50$ and $x_H=37/50$. Observing $X$ therefore tells
the receiver which realization occurred. The conditional state supports
$[0,3/5]$ and $[2/5,1]$ overlap, so this initial information does not
come from an interval partition. With $\delta_c$ denoting unit
probability at $c$, consider the target $G$ in
\begin{equation}\label{eq:discrete-target}
G_0=\tfrac12\delta_{13/50}+\tfrac12\delta_{37/50},\qquad
G=\tfrac12\delta_{1/4}+\tfrac12\delta_{3/4}.
\end{equation}
The upper bound holds because, from the prior alone, disclosing which
half contains the state induces $G$. For the lower bound, replace $13/50$ by $1/4$ with probability
$49/50$ and by $3/4$ with probability $1/50$, reversing the probabilities
at $37/50$. This preserves each initial mean and produces $G$.
Yet a receiver who observed $L$ knows that the state does
not exceed $3/5$. No additional message can therefore produce a final
posterior mean of $3/4$ for that receiver. Since the target allows only
$1/4$ and $3/4$, it would require every receiver who observed $L$ to
end with mean $1/4$. Their final posterior means would then average to
$1/4$, contradicting the initial mean $13/50$.

Section~\ref{sec:verify-discrete} shows
that the target violates even the weaker conditions~\eqref{eq:weak},
which remain necessary when the receiver has richer initial information.
It also makes Theorem~\ref{thm:excluded-set} explicit: for every
$\eta\in\mathcal B(G_0,F)$ and every weight $0<s<3/175$, the mixture
$(1-s)G+s\eta$ is excluded.

\subsection{Continuous initial means and strictly ordered beliefs}
\label{sec:continuous-example}

The next example removes both the discreteness of the initial posterior
means and the restriction on the states each receiver considers possible.
Let the initial joint law have density
\begin{equation}\label{eq:joint}
f_{X,\theta}(x,\theta)=5\bigl[1+12(x-1/2)(\theta-1/2)\bigr],
\qquad x\in[2/5,3/5],\quad \theta\in[0,1],
\end{equation}
and zero elsewhere. It is strictly positive throughout the rectangle.
The marginals and conditional mean are
\[
X\sim\Unif[2/5,3/5],\qquad \theta\sim\Unif[0,1],\qquad \E[\theta\mid X]=X.
\]

Here $\Unif[u,v]$ denotes the uniform distribution on $[u,v]$.
Every receiver regards every interior state as possible. Beliefs are
strictly ordered by the monotone likelihood ratio property: the ratio
of the conditional density at a higher initial mean to that at a lower
mean increases strictly with the state.

Take the target
\begin{equation}\label{eq:target-cont}
G=\tfrac14\Unif[0,1/4]
+\tfrac12\Unif[199/400,201/400]
+\tfrac14\Unif[3/4,1].
\end{equation}
It has a piecewise constant density and satisfies both convex-order
bounds. The outer quarters of the state distribution are unchanged, while
the middle half is concentrated close to $1/2$.

\begin{figure}[htbp]
\centering
\includegraphics[width=\textwidth]{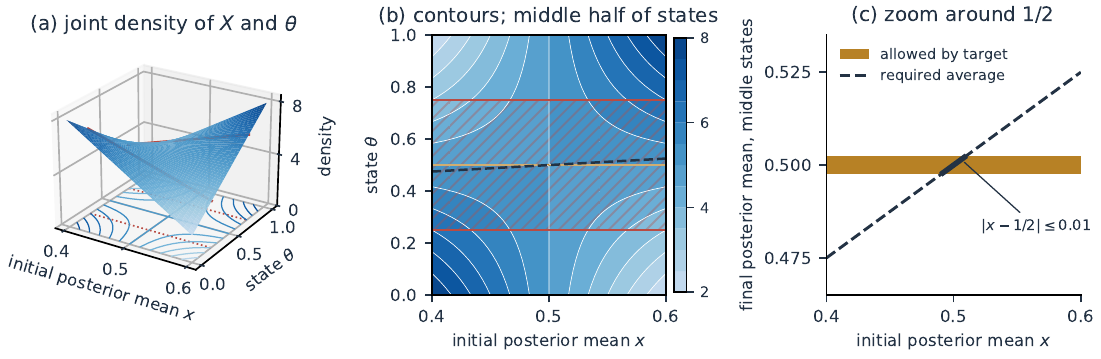}
\caption{The continuous example. (a)~The initial joint
density~\eqref{eq:joint} of the initial posterior mean $x$ and the state
$\theta$. (b)~Its contours. The hatched band is the middle half of the states,
$[1/4,3/4]$; the thin horizontal strip is the range $[199/400,201/400]$
that the target allows for their final posterior means; the dashed line
is $\frac12+\frac14(x-\frac12)$, the average state in the middle half for
a receiver with initial posterior mean $x$. (c)~Near $1/2$: the dashed
line is the required average of that receiver's final posterior means
for middle states. It meets the allowed strip only where
$|x-1/2|\leq1/100$ (thick segment).}
\label{fig:continuous}
\end{figure}

The target concentrates final means too tightly to respect the
receiver's initial beliefs. As Section~\ref{sec:verify-continuous} shows,
any implementation must reveal the outer-quarter states exactly and
assign the middle half to final means within $1/400$ of $1/2$.
Every receiver assigns probability $1/2$ to the middle half. Conditional
on that event, however, a receiver with initial mean $x$ expects the state to be
$\frac12+\frac14(x-\frac12)$. Since outer states are revealed, her
final means for middle states must average to this same value.
The required averages range from $0.475$ to $0.525$, whereas the target
allows only $[0.4975,0.5025]$ (Figure~\ref{fig:continuous}). They are
incompatible whenever $|x-1/2|>1/100$, which holds for nine-tenths of
receivers. This argument uses only~\eqref{eq:weak}, so the target also
remains excluded under richer initial information.

The gaps in the target's density are not essential to the failure.
By Theorem~\ref{thm:excluded-set}, mixing this target with a sufficiently
small positive weight $s$ on the uniform prior $F$ leaves it excluded.
The mixture has density at least $s$, hence positive almost everywhere.

\FloatBarrier
\section{An auction application}\label{sec:auctions}

The positive case of Theorem~\ref{thm:main} allows us to solve the
auction disclosure problem of \citet{BergemannEtAlAuctions} when bidders
initially know valuation intervals. The seller chooses how much more
each bidder learns within the interval containing her valuation.

A seller sells one object in a second-price auction without a reserve
price to $n\geq2$ risk-neutral bidders. Valuations $\theta_i$ are
independent draws from $F$ on $[a,b]$, where $0\leq a<b<\infty$ and
$F$ has a density positive almost everywhere on $(a,b)$. Each bidder
privately observes only the initial posterior mean $X_i$ generated
by the common interval partition in~\eqref{eq:partition}. Before
valuations are realized, the seller commits to a common message rule
known to all bidders. Following \citet{BergemannEtAlAuctions}, we restrict
attention to these symmetric rules with independent randomization:
$M_i$ depends only on $\theta_i$, and bidder $i$ observes only $(X_i,M_i)$.
The seller maximizes expected revenue over these rules.

Conditional on her information, each bidder's valuation is independent
of her rivals' bids. Her expected gain from winning at a given price is
therefore her final posterior mean minus that price. Bidding this mean,
$Y_i=\E[\theta_i\mid X_i,M_i]$, maximizes her expected payoff against any
bids her rivals submit using their own information.
These bids $Y_i$ are independent and identically distributed with common
law $G$. Each bidder faces rival bids drawn from this unconditional
distribution: she observes neither her rivals' initial intervals nor
their messages.

Revenue is the expected second-highest bid, so the seller evaluates
disclosure through this unconditional bid distribution.
Theorem~\ref{thm:main} makes every common bid distribution
$G\in\mathcal B(G_0,F)$ implementable, allowing the seller to choose $G$
directly.

An optimal rule reveals valuations below a cutoff within each initial
interval and pools the rest. We express the cutoffs in quantiles.
For a law $\tau$, write $\tau(t)=\tau([a,t])$ and
$\tau^{-1}(u)=\inf\{t\in[a,b]:\tau(t)\geq u\}$, $0<u<1$.
The quantile endpoints $(\alpha_j,\beta_j)=(F(\ell_j),F(r_j))$ of an
initial interval $I_j=(\ell_j,r_j)$ are the fractions of valuations
below its boundaries. A cutoff at the upper endpoint $\beta_j$ reveals
every valuation in $I_j$, and a cutoff at the lower endpoint $\alpha_j$
pools the whole interval, so that the bidder learns nothing new.

\begin{theorem}[Optimal disclosure within initial valuation intervals]
\label{thm:auction}
In the auction environment specified above, providing no further
information is optimal if $n=2$.

If $n\geq3$, put $\rho_n=(n-2)/(n-1)$. For each
$\beta\in(\rho_n,1]$, let $t_n(\beta)$ be the unique solution
$t\in(0,\rho_n)$ of
\begin{equation}\label{eq:auction-cutoff}
\int_t^\beta (u-\rho_n)u^{n-3}(u-\beta)\,du=0.
\end{equation}
For each initial interval $I_j$, set
\begin{equation}\label{eq:auction-cell-cutoff}
q_j=\begin{cases}
\beta_j,&\beta_j\leq\rho_n,\\
\max\{\alpha_j,t_n(\beta_j)\},&\beta_j>\rho_n.
\end{cases}
\end{equation}
The following deterministic message rule is optimal. Within each initial
interval $I_j$, reveal valuations $\theta_i$ with $F(\theta_i)<q_j$ and
pool the remaining valuations, disclosing only that
$F(\theta_i)\in[q_j,\beta_j]$. Send no message when $\theta_i$ lies
outside all initial intervals.
\end{theorem}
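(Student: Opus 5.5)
By Theorem~\ref{thm:main}, interval-partition information makes every $G\in\mathcal B(G_0,F)$ implementable, and the construction in Section~\ref{sec:sufficiency} uses only the state. A rule that conditions on $\theta_i$ alone is therefore admissible here. So I would solve the relaxed problem $\max_{G\in\mathcal B(G_0,F)}R(G)$ and then check that the stated rule attains the maximizer.

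\textbf{Quantile reformulation.} Writing the second-highest of $n$ i.i.d.\ bids through the quantile function gives
\[
R(G)=\int_0^1 G^{-1}(u)\,w(u)\,du,\qquad w(u)=n(n-1)u^{n-2}(1-u).
\]
This is linear in $G^{-1}$. Put $\Phi_G(u)=\int_0^u G^{-1}$. Convex order $\tau\cx\tau'$ is equivalent to $\Phi_\tau\ge\Phi_{\tau'}$ pointwise, with equality at $u=1$. The band is therefore $\Phi_F\le\Phi_G\le\Phi_{G_0}$, with $\Phi_G$ convex.

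Under \eqref{eq:partition}, $\Phi_{G_0}$ coincides with $\Phi_F$ outside $\bigcup_j(\alpha_j,\beta_j)$ and is the chord of $\Phi_F$ on each $[\alpha_j,\beta_j]$. The constraints therefore decouple cell by cell. On each cell, $\Phi_G$ is convex, lies between $\Phi_F$ and its chord, and matches $\Phi_F$ at both endpoints. Integrating by parts on a cell, and using that the endpoint values are fixed,
\[
\int_{\alpha_j}^{\beta_j}G^{-1}w=\text{const}-\int_{\alpha_j}^{\beta_j}\Phi_G(u)\,w'(u)\,du .
\]

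\textbf{Sign of $w'$.} We have $w'(u)=n(n-1)u^{n-3}\bigl[(n-2)-(n-1)u\bigr]$, so $w'$ is positive on $(0,\rho_n)$ and negative on $(\rho_n,1)$.
\begin{itemize}
\item For $n=2$, $w'<0$ everywhere. The optimum sets $\Phi_G$ equal to its upper bound, the chord, on every cell. That is $G=G_0$: no further information, which proves the first claim.
\item For $n\ge3$, if $\beta_j\le\rho_n$ then $w'>0$ on the cell. The optimum takes $\Phi_G=\Phi_F$ there, i.e.\ full revelation, which gives $q_j=\beta_j$.
\end{itemize}

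\textbf{The main step: $n\ge3$ and $\beta_j>\rho_n$.} I would use an ironing argument. Integrating \eqref{eq:auction-cutoff} by parts shows it is equivalent to
\[
w(t)=\frac{1}{\beta-t}\int_t^\beta w .
\]
Because $w$ rises then falls (single-peaked at $\rho_n$), this equation has a unique root in $(0,\rho_n)$. I would define the ironed weight $\bar w$ on the cell: $\bar w=w$ on $[\alpha_j,q_j]$, and $\bar w$ equals the average of $w$ on $[q_j,\beta_j]$. When $t_n(\beta_j)<\alpha_j$, take $q_j=\alpha_j$, so the cell is pooled entirely.

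This $\bar w$ is nondecreasing on the cell. Let $W$ and $\bar W$ be the antiderivatives of $w$ and $\bar w$ from $\alpha_j$. Then $\bar W-W$ vanishes at both ends and is $\le0$ in between, with equality on $[\alpha_j,q_j]$; the sign comes from the single-peakedness of $w$.

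Then, for any feasible $G$,
\[
\int G^{-1}w=\int G^{-1}\bar w+\int G^{-1}\,d(W-\bar W)\le\int G^{-1}\bar w .
\]
The second integral is $\le 0$ because $G^{-1}$ is nondecreasing and $W-\bar W\ge 0$ vanishes at both ends (integrate by parts once more). Equality holds when $G^{-1}$ is constant on $[q_j,\beta_j]$.

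Next, since $\bar w$ is nondecreasing and $\Phi_G\ge\Phi_F$ with equal endpoints, the same integration by parts gives
\[
\int G^{-1}\bar w\le\int F^{-1}\bar w ,
\]
with equality when $\Phi_G=\Phi_F$ wherever $\bar w$ strictly increases, that is, on $[\alpha_j,q_j]$.

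The candidate rule reveals valuations with $F(\theta_i)<q_j$ and pools the rest. Its $\Phi_G$ equals $\Phi_F$ on $[\alpha_j,q_j]$ and is affine on $[q_j,\beta_j]$, so it attains both equalities and is optimal. It is also deterministic, conditions only on $\theta_i$, and sends no message outside all intervals, so it is implementable as described.

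\textbf{Expected obstacle.} The delicate part is verifying the sign of $\bar W-W$ and the monotonicity of $\bar w$ across all cases of $\alpha_j$ relative to $t_n(\beta_j)$. This includes the boundary case $q_j=\alpha_j$, where the cell is pooled entirely. I would check it directly from single-peakedness of $w$ and the defining property $w(t)=\text{avg}_{[t,\beta]}w$.
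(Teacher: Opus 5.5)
Your proof is correct, but it certifies optimality by a different route than the paper. The paper integrates by parts once over $[0,1]$ to write revenue as $n(n-1)^2\int_0^1(u-\rho_n)u^{n-3}V_G(u)\,du$. For the interval containing $\rho_n$, it improves any feasible $V$ by replacing it with $\max\{V_F,L\}$, where $L$ is the line through $(\rho_n,V(\rho_n))$ and $(\beta,V_F(\beta))$. This reduces the problem to the one-parameter family $V_c$ of reveal-below-$c$, pool-above-$c$ policies, and the paper then finds the cutoff by differentiating revenue in $c$, whose sign is that of $\kappa_n(c,\beta)$.

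You instead iron the weight: $\bar W$ is the greatest convex minorant of $W$ on the cell. Your two Stieltjes integrations by parts give the bound $\int_{\alpha_j}^{\beta_j}F^{-1}\bar w$ for every feasible $G$, and the candidate attains it. Neither the replacement step nor the calculus in $c$ is needed.

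The two cutoff conditions coincide. With $h(t)=w(t)(\beta-t)-\int_t^\beta w$, integration by parts gives $h(t)=-n(n-1)^2\kappa_n(t,\beta)$, and $h'(t)=w'(t)(\beta-t)>0$ on $(0,\rho_n)$. This monotonicity also settles the sign checks you flag. If $t_n(\beta_j)<\alpha_j$, then $w(\alpha_j)$ exceeds the cell average of $w$, and the convex--concave shape of $W$ then yields $W\ge\bar W$.

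What your certificate buys:
\begin{itemize}
\item It treats fully pooled cells ($q_j=\alpha_j$, including cells above $\rho_n$) uniformly through a constant $\bar w$.
\item It uses only monotonicity of $G^{-1}$ and $\Phi_G\ge\Phi_F$, not the smoothness and strict convexity of $V_F$ that the paper draws from the density assumption.
\item Since $W>\bar W$ on $(q_j,\beta_j)$ and $\bar w$ is strictly increasing on $[\alpha_j,q_j]$, its equality conditions also show that the optimal bid distribution is unique.
\end{itemize}
The paper's route is more constructive. It shows explicitly how any policy can be improved, and that revenue rises up to $t_n(\beta)$ and falls after it, which is convenient for comparative statics.
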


When $q_j<\beta_j$, the pooling message induces posterior mean
\[
\frac{1}{\beta_j-q_j}\int_{q_j}^{\beta_j}F^{-1}(u)\,du.
\]
Valuations outside the initial intervals are already known, so no further
message is needed there.

For $n\geq3$, the revenue effect of bid dispersion changes at the
quantile $\rho_n$ (Section~\ref{sec:auction-proof}). A mean-preserving
spread confined to quantiles below $\rho_n$ raises revenue, whereas
pooling within quantiles above it raises revenue.
At most one initial interval contains the valuation $F^{-1}(\rho_n)$.
Only this interval can require partial disclosure: the seller reveals
its lower valuations and pools its upper valuations. Lower intervals
are fully revealed; higher intervals receive no further information.
The cutoffs depend on the number of bidders and on each interval's
quantile endpoints $(\alpha_j,\beta_j)$ in the full valuation
distribution. Each interval's rule thus accounts for competition from
bidders in all initial intervals. The rules can be calculated separately
because bids from different intervals remain ordered, and the proof
expresses revenue as a sum of their contributions. The valuation
distribution then determines the corresponding valuation thresholds and
pooled means.

For example, let $n=3$, so that $\rho_3=1/2$. If bidders are initially
uninformed, the cutoff is $t_3(1)=1/4$: the seller reveals valuations in
the lowest quarter of the distribution and pools the rest. If bidders
initially know only whether their valuations lie in the top quarter,
then $t_3(3/4)=3/8$: the seller reveals valuations below the
$3/8$-quantile, pools those between the $3/8$- and $3/4$-quantiles, and
adds nothing in the top quarter.

When bidders are initially uninformed (the single interval $(a,b)$),
the rule reduces to the disclosure policy of \citet{BergemannEtAlAuctions}: reveal low valuations and pool
high valuations. The proof in Section~\ref{sec:auction-proof} adapts
the integrated-quantile argument of \citet{SorokinWinter} to the
intervals fixed by the bidders' initial information. Outside interval
partitions, some bid distributions between the bounds cannot be induced.
Optimization over the band then gives an upper bound on revenue, and
its maximizer must be checked against the conditional constraints
in~\eqref{eq:conditional-feasibility}.

\section{Discussion: Persuading the Persuaded}\label{sec:discussion}

Persuasion is easiest when the audience arrives with a clean slate: a persuader
decides what to reveal, and listeners draw their conclusions. Real audiences are
rarely so obliging. Bidders arrive at an auction with views about what the
object is worth. Voters have lived through earlier governments, and investors
have watched past promises come to nothing. Every persuader joins a conversation
already under way.

With a single listener, the outcome turns on what she ends up expecting, and the
persuader can plan separately for each possible starting point. But bidders in
an auction and tactical voters at the polls must also reckon with what everyone
else expects. What matters, then, is how expectations are spread across the
whole crowd. For an audience that starts from scratch, theory offers an elegant
shortcut. Pick any pattern of expectations that is right on average and no more
spread out than the facts themselves, and some message will produce it. The
persuader can choose the destination first and work out the message later.

It is tempting to carry the trick over to an informed audience. Two limits are
plain: further disclosure cannot leave people knowing less than they already do,
or more than the whole truth. So note where the crowd stands today and where it
would stand if it learned everything, then aim anywhere in between. The prize
would be large. Tackled head-on, the problem is daunting: the persuader must
work out what each listener can still be told, given what she already knows, and
make the pieces add up to the desired pattern. The shortcut would reduce all of
that to two snapshots of the crowd, with mathematics filling in the rest.

This paper puts the shortcut to the test. For questions whose answer can lie
anywhere on a scale, such as a price, a cost or a value, it works in exactly one
case. What people already know must sort the possibilities into clear ranges, as
when a buyer's only information about a fridge is its certified energy band
(Theorem~\ref{thm:main}). Everywhere else, some destinations that look reachable
are not. These are not isolated curiosities. They occupy whole regions, and a
small nudge to the target will not make it attainable
(Theorem~\ref{thm:excluded-set}).

Starting from scratch, a persuader can reach any destination the whole truth
allows, simply by adding information with care. With an informed audience, the
shortcut has a blind spot. It sees what people already know only through what
they expect, and an expectation is a single number. Behind it lies a whole
distribution: how likely each possible answer seems. For a yes-or-no question
nothing is lost, since the chance someone assigns to ``yes'' says it all. For a
painting's value or a fridge's running costs, the same average can conceal quite
different distributions. Filling in the rest, the mathematics may then quietly
require people to discard information they hold: a buyer who already knows that
the value is at most 100 would have to end up expecting 120.

In most real settings that is not an option: a persuader can add to what people
know, but cannot make them forget. Nor can tailoring get around the problem,
even when the persuader can see into every head and address each person
individually. Messages do not work in isolation: revealing some possibilities
changes how the rest are understood. Commit to announcing all the good news, and
silence itself becomes bad news.

Clear ranges are the exception because they divide the possible values without
overlap. Any destination allowed by the two snapshots can be split along the
same boundaries and reached by telling people more about their own range. Nobody
needs to forget anything.

This carries a lesson for the craft of influencing people with information.
Firms and campaigns increasingly predict what a person will pay or whom she will
vote for, and then try to shift those choices with well-chosen facts. But two
people with the same price in mind may hold quite different beliefs about what
they are buying. Those beliefs determine what further evidence can achieve. More
data can help uncover them, but uncovering is not erasing: a perfectly tailored
message still meets a reader with a past.

The form of knowledge therefore matters as much as its amount. Labels that
report only a range, such as a certified energy band, keep within reach every
destination consistent with what buyers now expect and with the facts. A noisy
test or a complicated score quietly closes some of them.

Mandatory disclosure is of particular interest here, because it may turn the
form of what people learn first into a policy choice. Suppose a government
requires sellers to provide information about an appliance's energy use. A rule
that sets only how informative that disclosure must be leaves sellers free to
design it themselves, and so to reach any pattern between that minimum and the
whole truth. A rule that prescribes a particular piece of information, such as a
specific test result, fixes the starting point for whatever sellers say next.
Its form then determines whether their later marketing can deliver everything
the two snapshots promise. Both rules go by the name of transparency, yet they
can make a world of difference to the information that follows.

The shortcut is thus safe in three situations. The first is an audience that
starts from scratch or whose knowledge comes in clear ranges: the persuader can
then pick the destination in advance, confident that some message will reach it
(Section~\ref{sec:auctions} illustrates this with an auction). The second is a
seller that must meet a minimum standard of informativeness but may choose how.
It can plan its disclosure and everything it says afterwards as one. The third
is a persuader who can make people forget, like Orwell's Ministry of Truth. What
they knew before then no longer matters, and every destination between the two
snapshots comes within reach, along with many beyond.

Everyone else has to take the audience as it comes and work the other way round.
They must first establish, case by case, what each listener can still be told
given what she already knows, and only then discover which destinations are
within reach. The two snapshots can still suggest an answer, but they cannot
vouch for it.

Writing on a blank slate is the easy part. Most slates are not blank, and what
is already written on them shapes what can be written next.

\section{Proofs}\label{sec:proof}

Section~\ref{sec:thm1-proof} proves Theorem~\ref{thm:main}.
The remaining subsections prove Corollary~\ref{thm:impossibility},
Theorem~\ref{thm:excluded-set}, and Corollary~\ref{cor:value},
verify the examples, and prove the auction result.

\subsection{Proof of Theorem~\ref{thm:main}}\label{sec:thm1-proof}

The proof of Theorem~\ref{thm:main} starts by identifying the states that
every target in the band must leave unchanged. The receiver already
knows these states. The remaining states form
\emph{uncertainty intervals}; each target must preserve their probabilities
and conditional means (Claim~\ref{clm:decomposition}). Under interval-partition
information, these are the disclosed intervals. Strassen's theorem then
implements any target separately within each interval
(Claim~\ref{clm:sufficiency}).

For necessity, we assume only that every target with a density in the
band admits a triple satisfying the weaker conditions~\eqref{eq:weak}.
Section~\ref{sec:necessity} proceeds in four steps:
\begin{enumerate}
\item \emph{Pool a short interval.} Pool a sufficiently short subinterval
of an uncertainty interval at its prior conditional mean. The target
lies in the band. Approximating it by targets with densities and using
closedness (Claim~\ref{clm:closedness}) gives a triple satisfying the weaker
conditions for this pooling target too (Claim~\ref{clm:pools}).
\item \emph{Identify the states behind each final mean.} For this target,
the condition $\E[\theta\mid Y]=Y$ determines the relation between
$Y$ and $\theta$: every state in the pool has the same final mean,
and all other states are revealed. Preserving each receiver's initial
mean then gives a local moment restriction on the initial experiment
(Claim~\ref{clm:local-identity}).
\item \emph{Vary the pool.} Moving its endpoints shows that the initial
posterior mean, averaged conditional on the state, is constant on each
uncertainty interval (Claim~\ref{clm:reverse-constancy}).
\item \emph{Recover the initial information.} Combining this constancy
with $\E[\theta\mid X]=X$ gives zero conditional variance of $X$
within each interval. Thus $X$ itself is constant there, giving
interval-partition information (Claim~\ref{clm:collapse}).
\end{enumerate}

\subsubsection{Call functions and uncertainty intervals}
\label{sec:geometry}\label{sec:contacts}

For a probability law $\tau$ on $[a,b]$, its \emph{call function}
records the expected amount by which a draw exceeds a threshold $t$:
\[
C_\tau(t)=\int(z-t)_+\,\tau(dz),\qquad (z-t)_+=\max\{z-t,0\}.
\]
Laws with the same mean are in convex order precisely when their call
functions are ordered. Thus $D=C_F-C_{G_0}\geq0$ measures the gap
between the two bounds. Since $D$ is continuous, $\{D>0\}$ is open;
its disjoint interval components are at most countable, as each contains
a distinct rational number. Denote them by $I_j$, $j\in J$, and call
them the \emph{uncertainty intervals}. Put
$R=[a,b]\setminus\bigcup_j I_j$. The points in $R$
are \emph{contact points}, where the two call functions agree; together
they form the \emph{contact set}. As the next claim shows,
the state is already known on $R$,
while further disclosure can change posterior means only within the
intervals $I_j$.

For a Borel set $A$, the restriction $F|_A$ is the unnormalized measure
defined by $F|_A(B)=F(A\cap B)$ for Borel sets $B$. Write
$p_j=F(I_j)$, $k_j=\int_{I_j}\theta\,F(d\theta)/p_j$, and
$F_j=F|_{I_j}/p_j$ for the probability of interval $j$, its
conditional state mean, and its conditional state distribution.
These definitions apply to every initial experiment. Under
interval-partition information, the uncertainty intervals coincide with
the disclosed intervals, as shown in Section~\ref{sec:sufficiency}.
Every triple considered below has the prescribed joint law $\pi$ of
$(X,\theta)$.

\begin{claim}[Contact points separate uncertainty intervals]\label{clm:decomposition}
For each $I_j$, the events $\{X\in I_j\}$ and $\{\theta\in I_j\}$ agree
almost surely. On $\{\theta\in R\}$, $X=\theta$ almost surely. Every law
$G$ in the two-sided band satisfies
\begin{equation}\label{eq:component}
G|_R=F|_R,\qquad G(I_j)=p_j,\qquad
\int_{I_j}y\,G(dy)=p_jk_j,\qquad
G_j:=\frac{G|_{I_j}}{p_j}\cx F_j.
\end{equation}
\end{claim}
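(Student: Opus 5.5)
The plan uses the call-function description of convex order together with conditional Jensen. The second half of the claim, about $G$, will then follow by squeezing $C_G$ between $C_{G_0}$ and $C_F$ and reading masses and means from the values and slopes of $C_G$ at the endpoints of each $I_j$.

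For the first half, fix a contact point $t\in R$, so $C_F(t)=C_{G_0}(t)$. Since $\E_\pi[\theta\mid X]=X$ and $z\mapsto(z-t)_+$ is convex, conditional Jensen gives $\E[(\theta-t)_+\mid X]\geq (X-t)_+$. Taking expectations, equality holds, so $\E[(\theta-t)_+\mid X]=(X-t)_+$ almost surely. Jensen for the kink function is strict unless the conditional law of $\theta$ given $X$ sits entirely on one side of $t$. Hence, almost surely, $X\leq t$ implies $\theta\leq t$, and $X>t$ implies $\theta\geq t$; by symmetry, using $(t-z)_+$, whose expectations also agree because the means are equal, we get $X<t\Rightarrow\theta\le t$ and $X\ge t\Rightarrow \theta\ge t$. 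Since $F$ has a density, the events $\{\theta=t\}$ are null, so for each contact point $t$ the events $\{X<t\}$ and $\{\theta<t\}$ coincide a.s. (up to the null set $\{X=t,\theta\neq t\}$, handled by the same strictness argument). Applying this to a countable dense subset of $R$ and to the endpoints of the $I_j$ (which lie in $R$, as $D$ is continuous and vanishes at $a,b$) gives $\{X\in I_j\}=\{\theta\in I_j\}$ a.s.

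For the statement $X=\theta$ on $\{\theta\in R\}$, I would argue that the pair must be separated by some contact point whenever $X\ne\theta$ and they are not in the same $I_j$. Points of $R$ that are not endpoints are limits of points of $R$ from at least one side, or else $F$ gives them zero mass. This measure-theoretic bookkeeping on $R$, which may be a fat Cantor-like set, is the step I expect to be the main obstacle. The cleanest route is probably to show that $\{\theta\in R, X\ne\theta\}$ is contained in a countable union of events $\{X<q\le\theta\}$ or $\{\theta<q\le X\}$ with $q$ ranging over a countable subset of $R$ dense in $R$ from both sides, together with $F$-null sets. Each of these events is null by the previous step.

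For $G$ in the band, $C_{G_0}\le C_G\le C_F$ and the three laws have equal means, so $C_G=C_F$ on $R$. At an endpoint $t$ of $I_j$, the sandwiched function $C_G$ touches $C_F$, and both $C_{G_0}$ and $C_F$ are differentiable at contact points where $F$ has no atom. I would derive $G([t,b])=F([t,b])$ from one-sided derivatives: left and right derivatives of $C_G$ are squeezed between those of $C_F$ and $C_{G_0}$, which coincide at $t$ because $C_F-C_{G_0}\ge0$ has a minimum $0$ there and $C_F$ is differentiable. Differences of these tail masses give $G(I_j)=p_j$. Differences of $C_G(t)+tG((t,b])=\int_{(t,b]}z\,G(dz)$ give the interval means $p_jk_j$. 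Equality of $C_G$ and $C_F$ with matching derivatives along $R$ identifies $G|_R=F|_R$, using the same countable approximation as before. Finally, $G_j\cx F_j$ holds because on $I_j$ the normalized call functions satisfy $C_{G_j}\le C_{F_j}$. This follows from $C_G\le C_F$ with matched boundary values and slopes, since the contributions from outside $I_j$ are affine on $I_j$ and identical for $G$ and $F$; equal means were already shown.
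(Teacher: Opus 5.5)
Your proposal is correct. The first half is essentially the paper's argument: equality in conditional Jensen for $(z-t)_+$ at a contact point stops $X$ and $\theta$ from lying on opposite sides of $t$, and diagonality on $R$ comes from a countable dense $R_0\subseteq R$. The step you flag as the main obstacle closes as you plan, and more simply than you expect. If $\theta\in R$, $X\neq\theta$, and the open interval between them meets $R$, then it meets $R_0$; any countable dense subset works, and two-sided density is not needed. Otherwise that interval lies in a single $I_j$ and $\theta$ is an endpoint of $I_j$. This is an $F$-null event, because there are countably many endpoints and $F$ is atomless.

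For the statements about $G$ you take a genuinely different route. The paper takes a Strassen martingale coupling of $G$ and $F$ and reruns the same no-crossing cuts on $(Y,\theta)$. Masses, means and $G|_R=F|_R$ then follow from $\{Y\in I_j\}=\{\theta\in I_j\}$ and $Y=\theta$ on $R$, and $G_j\cx F_j$ follows from conditional Jensen within each component.

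You stay with call functions instead. Since $C_G$ is convex and touches the differentiable $C_F$ from below at each contact point $t$, it is differentiable there with $C_G'(t)=-F((t,b])$. This gives $G(\{t\})=0$ and $G((t,b])=F((t,b])$, and differencing yields the masses and means of each $I_j$. The convex order within $I_j$ then follows after removing the affine contribution of the mass outside $I_j$, which you have shown is identical for $G$ and $F$.

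Your appeal to ``the same countable approximation'' for $G|_R=F|_R$ is best made precise through distribution functions. For $t\in R$,
\[
G(R\cap[a,t])=1-G((t,b])-\sum_{I_j\subseteq[a,t]}p_j ,
\]
which equals the same expression for $F$. For $t\in I_k$, replace $t$ by $\ell_k$. So the two restricted measures have the same distribution function.

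Your route avoids Strassen's theorem for this part and is entirely elementary. The paper's route is shorter, because a single no-crossing lemma serves both halves of the claim.
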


\begin{proof}
This is the standard no-crossing decomposition
\citep[Theorem~A.4]{BeiglbockJuillet}; we give the argument needed here.
We first show that equality of call functions at a threshold prevents a
martingale coupling from moving mass across it. Applying this observation
at the contact points will separate the uncertainty intervals.
If integrable variables $A,B$ satisfy $\E[B\mid A]=A$ and have equal call
functions at $t$, conditional Jensen is an equality for $(B-t)_+$.
It follows that $A<t$ implies $B\leq t$, $A>t$ implies $B\geq t$, and
$A=t$ implies $B=t$. The last assertion also uses $\E[B\mid A]=A$.
If the law of $B$ is atomless, both variables have zero mass at $t$.

Apply this to $(X,\theta)$ at every endpoint of the components $I_j$ of
$\{D>0\}$. It gives $\{X\in I_j\}=\{\theta\in I_j\}$ almost surely, so
these events have mass $p_j$ and conditional mean $k_j$.
On $R$, we have $X=\theta$ almost surely. To show this simultaneously on a
possibly uncountable contact set, let $R_0\subseteq R$ be a countable dense
subset that contains all component endpoints. Using a countable set lets
us impose all these no-crossing restrictions outside a single null event.
Almost surely, no point of $R_0$ lies strictly between $X$ and $\theta$, and $\theta$ is not a
component endpoint. Suppose $\theta\in R$ and $X\neq\theta$. If the open
interval between them contained no point of $R$, it would lie in a single
component, and $\theta$ would be an endpoint of that component. Hence the
interval contains a point of $R$, and therefore a point of $R_0$,
contradicting the no-crossing property.
This proves diagonality on $R$, and hence $G_0|_R=F|_R$.

For any $G$ in the band, $C_G=C_F$ on $R$. Choose a
martingale coupling from $G$ to $F$ and apply the same cuts.
This gives the restriction, mass, and first-moment identities
in~\eqref{eq:component}. The convex-order inequality follows from
conditional Jensen within each component.
\end{proof}

\paragraph{The one-sided problems within intervals.}
Under interval-partition information, every implementable law, and every law in the
band, has the form
\begin{equation}\label{eq:feasible}
\boxed{\quad G=F|_R+\sum_{j\in J}p_jG_j,
\qquad G_j\cx F_j\quad(j\in J).\quad}
\end{equation}
Here each $G_j$ is a freely chosen probability law of posterior
means within interval $j$. Convex order fixes its mean at $k_j$ and
restricts its support to that interval. There is no separate lower restriction to check:
the initial law there is just $\delta_{k_j}$, unit probability at $k_j$,
and every law with that mean is above it in convex order.
The sender can therefore choose each $G_j$ by solving a one-sided
problem with conditional state distribution $F_j$, then combine those
choices with weights $p_j$. Outside interval-partition information,
receivers whose states lie in the same uncertainty interval may have
different initial posterior means. Matching that interval's probability
and conditional mean alone does not ensure that each receiver's initial
information remains consistent with the target.

\subsubsection{Interval-partition information: every target is implementable}
\label{sec:sufficiency}

\begin{claim}[Implementation within the disclosed intervals]\label{clm:sufficiency}
If $\pi$ has the interval-partition form~\eqref{eq:partition}, every law
$G_0\cx G\cx F$ is implementable. The message rule can depend
on $\theta$ alone.
\end{claim}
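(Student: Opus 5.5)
The plan is to reduce the problem to the uncertainty intervals of Claim~\ref{clm:decomposition} and to build the message rule one interval at a time from Strassen's theorem. The reduction needs one preliminary fact: under the interval-partition form~\eqref{eq:partition}, the uncertainty intervals are exactly the disclosed intervals $I_j$, and the contact set $R$ is their complement. To see this, compute $D=C_F-C_{G_0}$ directly. Because $G_0=F|_{[a,b]\setminus\bigcup_j I_j}+\sum_j F(I_j)\delta_{\E_\pi[\theta\mid\theta\in I_j]}$, the difference $D$ decomposes as a sum over $j$ of $F(I_j)\bigl(C_{F_j}-C_{\delta_{k_j}}\bigr)$, where $F_j$ is the conditional law on $I_j$.

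Each summand vanishes outside $I_j$, since the two laws agree in mean and both sit inside $\bar I_j$. Inside $I_j$ it is strictly positive, because $f>0$ almost everywhere on $I_j$ means $F_j$ is not a point mass and its call function is strictly above that of $\delta_{k_j}$ on the open interval. Hence $\{D>0\}=\bigcup_j I_j$, and the components coincide, provided the $I_j$ are disjoint with disjoint closures. Adjacent intervals sharing an endpoint still remain separate components, because $D=0$ at the shared point. With this identification, every $G$ in the band has the form~\eqref{eq:feasible} by Claim~\ref{clm:decomposition}: $G|_R=F|_R$, and $G_j\cx F_j$ on each interval.

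The rule then uses $\theta$ alone. If $\theta\in R$, send $Y=\theta$. If $\theta\in I_j$, apply Strassen's theorem to $G_j\cx F_j$ to obtain a martingale coupling $\gamma_j$ of $(Y_j,\theta)$ with $\E[\theta\mid Y_j]=Y_j$. Draw $Y$ from the disintegration $\gamma_j(\cdot\mid\theta)$ and send it as the message. The map $\theta\mapsto$ (law of the message) is measurable, since countably many measurable pieces are glued on disjoint Borel sets. The resulting law of $Y$ is $F|_R+\sum_j p_jG_j=G$.

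The remaining work is to check~\eqref{eq:implementation}. On $I_j$, $X=k_j$ is constant, so $\E[Y\mid X=k_j]=\int y\,G_j(dy)=k_j$. On $R$, $Y=\theta=X$. For the second identity, note that $Y\in\bar I_j$ when $\theta\in I_j$, and $X$ is a function of the index of the cell containing $\theta$. Conditioning on $(X,Y)$ therefore amounts to conditioning on the cell and $Y$, and within cell $j$ the coupling gives $\E[\theta\mid Y,\theta\in I_j]=Y$.

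The main obstacle is making that last step rigorous when values of $Y$ can coincide across cells. This can happen at a shared endpoint of adjacent intervals, or when a value in $R$ equals a possible $Y$ from some cell. In those cases $Y$ alone does not identify the cell, but $(X,Y)$ does: $X$ equals $k_j$ on cell $j$ and equals $\theta\in R$ on the contact set. Distinct cells carry distinct $X$ values, because the $k_j$ lie in the open intervals $I_j$ while contact points do not. I would therefore condition on the partition generated by $X$ and finish with the tower property. For $\theta\in R$ the statement is trivial, and if an atom of $Y$ happens to sit at a point of $R$ from within a cell, the $\gamma_j$-identity still holds there. This establishes~\eqref{eq:implementation} and hence implementability.
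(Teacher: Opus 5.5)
Your proposal is correct and follows the paper's proof. Both compute $D$ to identify the disclosed intervals with the uncertainty intervals, write $G$ in the form~\eqref{eq:feasible} via Claim~\ref{clm:decomposition}, glue Strassen couplings cell by cell using $\theta$ alone, and check that the message is the posterior mean. The only difference is the final step. The paper notes that a martingale coupling puts $Y_j$ at an endpoint of $I_j$ only when $\theta$ is that endpoint, a null event, so the message alone reveals the cell and $\E[\theta\mid X,M]=\E[\theta\mid M]=M$. You instead let $X$ identify the cell within the conditioning on $(X,Y)$; this is equally valid and removes your ``main obstacle'' without the endpoint observation.
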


\begin{proof}
We first identify the disclosed intervals with the uncertainty intervals,
so that Claim~\ref{clm:decomposition} describes each target's conditional
parts. We can then implement those parts separately.
Suppose \eqref{eq:partition} holds with intervals $I_j=(\ell_j,r_j)$,
and write $p_j=F(I_j)$ and $k_j=\E_\pi[\theta\mid \theta\in I_j]$. Then $X$ and
$\theta$ lie in the same \emph{cell} of the partition: the same disclosed
interval, or the same single state outside all such intervals. Thus $\{X=k_j\}=\{\theta\in I_j\}$
almost surely. For $t\notin\bigcup_jI_j$, every cell lies on one side of
$t$, so $\E[\theta\mid X]=X$ gives $D(t)=0$. For $t\in I_j$, the other cells
contribute nothing, and
\[
D(t)=\int_{I_j}(\theta-t)_+\,F(d\theta)-p_j(k_j-t)_+
=\begin{cases}
\int_{I_j}(\theta-t)_+\,F(d\theta),&t\geq k_j,\\[2pt]
\int_{I_j}(t-\theta)_+\,F(d\theta),&t<k_j,
\end{cases}
\]
which is positive because $f>0$ almost everywhere on $I_j$. The
intervals $I_j$ are therefore the uncertainty intervals.

Let $G\in\mathcal B(G_0,F)$, written as in~\eqref{eq:feasible}.
For each $j$, Strassen's theorem gives a pair $(Y_j,\theta_j)$ with
$Y_j\sim G_j$, $\theta_j\sim F_j$, and $\E[\theta_j\mid Y_j]=Y_j$. If $\theta\notin\bigcup_jI_j$, send $M=\theta$. If $\theta\in I_j$, draw
$M$ from the conditional law of $Y_j$ given $\theta_j=\theta$.
It remains to check that retaining $X$ does not change the posterior mean
induced by $M$. A martingale coupling can place $Y_j$ at an endpoint of $I_j$ only if $\theta_j$ equals that endpoint,
which has probability zero; hence $M$ lies in the same cell as $\theta$ and
reveals $X$. Consequently $\E[\theta\mid X,M]=\E[\theta\mid M]=M$, so the final
posterior mean is $M$ and its law is $F|_R+\sum_jp_jG_j=G$.
The message rule uses only $\theta$.
\end{proof}

\subsubsection{Other initial experiments: an excluded target with a density}
\label{sec:necessity}

Consider triples $(X,Y,\theta)$ that satisfy~\eqref{eq:weak} and have
$(X,\theta)\sim\pi$. Let $\mathcal A(\pi)$ be the set of their marginal
laws $G=\Law(Y)$. Recall from Section~\ref{sec:richer} that
these conditions hold whenever $Y$ is the final posterior mean of a
receiver whose initial information has posterior mean $X$, and that
they follow from~\eqref{eq:implementation}. Hence
\[
\mathcal I(\pi)\subseteq\mathcal A(\pi)\subseteq\mathcal B(G_0,F),
\]
where the second inclusion follows from conditional Jensen. Throughout
this part of the proof, assume that every absolutely continuous target in the
band belongs to $\mathcal A(\pi)$; we show that $\pi$ then generates
interval-partition information. The assumption is weaker than
implementability of every target in two ways: it concerns only targets
with densities, and it does not require the final posterior mean to be
the expected state given both $X$ and $Y$.

The argument uses limits of targets in the weak topology. The next claim
shows that $\mathcal A(\pi)$ contains the limits of its elements; the
same property of $\mathcal I(\pi)$ is used in the proof of
Theorem~\ref{thm:excluded-set}.

\begin{claim}[Closedness of $\mathcal A(\pi)$ and $\mathcal I(\pi)$]\label{clm:closedness}
The sets $\mathcal A(\pi)$ and $\mathcal I(\pi)$ are convex and weakly
compact, and therefore weakly closed.
\end{claim}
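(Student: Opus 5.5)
We will work with the joint laws of the triples rather than with the targets directly. Let $\mathcal T_{\mathcal A}$ be the set of laws $\mu$ of triples $(X,Y,\theta)$ on $[a,b]^3$ with $(X,\theta)$-marginal $\pi$ that satisfy~\eqref{eq:weak}. Let $\mathcal T_{\mathcal I}$ be the corresponding set under~\eqref{eq:implementation}. Then $\mathcal A(\pi)$ and $\mathcal I(\pi)$ are the images of these sets under the continuous projection $\mu\mapsto\Law(Y)$. A continuous image of a compact set is compact, and projection is linear. So it suffices to show that $\mathcal T_{\mathcal A}$ and $\mathcal T_{\mathcal I}$ are convex and weakly compact. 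Weak compactness of a set of laws on the compact space $[a,b]^3$ reduces, by Prokhorov, to weak closedness. Closedness then follows because $[a,b]$ is Hausdorff.

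\textbf{Step 1: linear test-function formulation.} Both sets are defined by linear constraints on $\mu$. The marginal condition is $\int h(x,\theta)\,d\mu=\int h\,d\pi$ for continuous $h$. The condition $\E[Y\mid X]=X$ is equivalent to
\[
\int (y-x)\,\phi(x)\,d\mu=0 \quad\text{for all continuous }\phi.
\]
Since everything is bounded, continuous $\phi$ suffice to test conditional expectations by density of continuous functions in $L^1$. Likewise, $\E[\theta\mid Y]=Y$ is equivalent to $\int(\theta-y)\psi(y)\,d\mu=0$ for continuous $\psi$. Finally, $\E[\theta\mid X,Y]=Y$ is equivalent to $\int(\theta-y)\chi(x,y)\,d\mu=0$ for continuous $\chi$ on $[a,b]^2$.

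\textbf{Step 2: convexity and closedness.} Each constraint is linear in $\mu$, so the sets are convex. Each integrand is a bounded continuous function on $[a,b]^3$, so each constraint is preserved under weak limits. The sets are therefore intersections of weakly closed affine sets. They are nonempty because the triple with $Y=X$ belongs to both.

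\textbf{Step 3: transfer to the targets.} Projection is linear and weakly continuous. Hence $\mathcal A(\pi)$ and $\mathcal I(\pi)$ are convex and compact, and so weakly closed. For $\mathcal I(\pi)$, we use the equivalence stated after~\eqref{eq:implementation}: a target is implementable exactly when such a triple exists.

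\textbf{Main obstacle.} The only delicate step is justifying that the conditional-expectation identities are characterized by continuous test functions, rather than by all bounded measurable ones. This follows because, for a finite Borel measure on a compact metric space, continuous functions are dense in $L^1$. Since the integrands $y-x$ and $\theta-y$ are bounded, passing from continuous to bounded measurable test functions is harmless.
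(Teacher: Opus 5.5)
Your proposal is correct and follows the paper's proof essentially step for step. You impose the $\pi$ marginal and the conditional-mean identities on joint laws on $[a,b]^3$ as linear constraints tested by continuous functions, use weak closedness inside a compact space, and push forward by the linear, weakly continuous $y$-projection; your $L^1$-density justification is equivalent to the paper's remark that continuous functions determine finite signed measures. One small wording fix: compactness implies closedness because the space of laws on $[a,b]$ with the weak topology is metrizable, hence Hausdorff, not merely because $[a,b]$ itself is Hausdorff.
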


\begin{proof}
To pass to weak limits, we write the conditional-mean restrictions as
integral identities with continuous integrands.
Probability laws $Q$ of
$(x,y,\theta)$ on the compact space $[a,b]^3$ form a weakly compact space;
fixing their $(x,\theta)$ marginal to $\pi$ is a closed condition.
Weak convergence preserves the following identities because their
integrands are continuous and bounded:
\[
\int(y-x)h(x)\,dQ=0,\qquad
\int(\theta-y)h(y)\,dQ=0
\quad\text{for every continuous }h:[a,b]\to\mathbb R.
\]
These integral conditions are equivalent to \eqref{eq:weak},
because continuous functions determine finite signed measures.
The continuous projection to the $y$ marginal therefore gives the compact set
$\mathcal A(\pi)$. Replacing the second condition by
$\int(\theta-y)g(x,y)\,dQ=0$ for every continuous $g:[a,b]^2\to\mathbb R$
imposes \eqref{eq:implementation}; the same argument gives compactness
of $\mathcal I(\pi)$. Both sets are convex because the restrictions
on $Q$ are linear and taking a marginal is a linear operation.
\end{proof}

\paragraph{Pooling targets.}
The targets used below replace the states in a short interval by their
conditional mean and leave all other states unchanged. Taking the
interval sufficiently short keeps the resulting law above the lower
bound, while pooling automatically respects the upper bound.
Fix an uncertainty interval $I$ and a nondegenerate compact interval
$\widehat I\subset I$. Since $D$ is continuous and positive on
$\widehat I$, compactness gives a positive minimum
$d_{\widehat I}=\min_{t\in\widehat I}D(t)$. This uniform gap leaves room
to lower the prior's call function by pooling without crossing the lower
bound. Call $K=(u,v)$ an \emph{admissible pool} if $[u,v]\subset\widehat I$ and
$0<v-u<d_{\widehat I}$, and let
\[
m_K=\frac{\int_K \theta\,F(d\theta)}{F(K)},\qquad
T_K(\theta)=\begin{cases}m_K,&\theta\in K,\\\theta,&\theta\notin K,\end{cases}
\qquad G_K=\Law(T_K(\theta)).
\]

\begin{claim}[Small pooling targets belong to $\mathcal A(\pi)$]\label{clm:pools}
For every admissible pool $K$, the law $G_K$ lies in the two-sided
band and belongs to $\mathcal A(\pi)$.
\end{claim}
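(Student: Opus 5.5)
First I will check that $G_K$ lies in the band; then I will approximate $G_K$ by absolutely continuous targets in the band and conclude with Claim~\ref{clm:closedness}.

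\emph{Band membership.} The upper bound $G_K\cx F$ is immediate. The pair $(T_K(\theta),\theta)$ satisfies $\E[\theta\mid T_K(\theta)]=T_K(\theta)$, since on $K$ the conditional mean is $m_K$ and off $K$ the state is revealed. Strassen's theorem then gives the order, or one can apply conditional Jensen directly.

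For the lower bound I compare call functions. For every $t$,
\[
0\le C_F(t)-C_{G_K}(t)=\int_K\bigl[(\theta-t)_+-(m_K-t)_+\bigr]F(d\theta)\le F(K)\,(v-u)\le v-u,
\]
because $|(\theta-t)_+-(m_K-t)_+|\le|\theta-m_K|\le v-u$ for $\theta\in K$. The difference vanishes for $t\notin(u,v)$, since there $(\cdot-t)_+$ is affine on $K$. For $t\in[u,v]\subset\widehat I$ we have $D(t)\ge d_{\widehat I}>v-u$. Hence
\[
C_{G_K}\ge C_F-(v-u)>C_F-D=C_{G_0}\quad\text{on }[u,v],
\]
and $C_{G_K}=C_F\ge C_{G_0}$ elsewhere. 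The means agree, so $G_0\cx G_K$.

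\emph{Approximation.} $G_K$ has an atom at $m_K$, so it is not absolutely continuous. I smear the atom over a small interval: set
\[
G_K^\varepsilon=F|_{[a,b]\setminus K}+F(K)\,\Unif[m_K-\varepsilon,m_K+\varepsilon]
\]
for small $\varepsilon>0$ with $[m_K-\varepsilon,m_K+\varepsilon]\subset K$. This law is absolutely continuous, and $G_K^\varepsilon\to G_K$ weakly as $\varepsilon\to0$.

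\begin{itemize}
\item \emph{Upper bound.} Within $K$, the uniform law is a mean-preserving contraction of the conditional law $F|_K/F(K)$. It has the same mean, and its call function lies between those of $\delta_{m_K}$ and of $F|_K/F(K)$ once $\varepsilon$ is small. The cleanest way to verify this is that a Strassen coupling exists because $F$ has positive density near $m_K$. So I plan to instead take the smeared law as $\Law(\E[\theta\mid T^\varepsilon])$ for the partition of $K$ into a middle piece of small width and the rest pooled. More simply, I will define $G_K^\varepsilon$ as the law of the conditional expectation of $\theta$ given a garbling that reveals $\theta$ off $K$ and on $K$ reveals $m_K+\varepsilon\,(\text{independent } \Unif[-1,1])$ with small probability mixed appropriately. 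The point is that any mean-preserving spread of $\delta_{m_K}$ supported in $[m_K-\varepsilon,m_K+\varepsilon]$ with a density and dominated by $F|_K/F(K)$ works. For small $\varepsilon$ such a spread exists: for instance the law of $\E[\theta\mid \text{pool }K\text{ into fine cells after randomization}]$. Rather than fuss over this, I will use the convex combination $(1-\varepsilon)\delta_{m_K}$-part mixed toward $F|_K$: take $G_K^\varepsilon=(1-\varepsilon)G_K+\varepsilon F$. This is absolutely continuous only off the atom, so I instead smear the atom as above and check the call-function inequalities directly.

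\item \emph{Lower bound.} Smearing changes $C_{G_K}$ by at most $F(K)\varepsilon$, so the strict margin $C_{G_K}-C_{G_0}\ge d_{\widehat I}-(v-u)>0$ on $[u,v]$ survives for small $\varepsilon$. Outside $K$ smearing does not alter the call function.

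\item \emph{Upper bound, direct check.} On $K$ the uniform's call function must stay below that of $F|_K/F(K)$. This holds for small $\varepsilon$ because $C_{F|_K/F(K)}-C_{\delta_{m_K}}$ is positive on the interior of $K$ and continuous, hence bounded below near $m_K$.
\end{itemize}

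\emph{Conclusion and main obstacle.} Each $G_K^\varepsilon$ is an absolutely continuous law in the band, so it lies in $\mathcal A(\pi)$ by the standing assumption. Claim~\ref{clm:closedness} then gives $G_K\in\mathcal A(\pi)$. The main obstacle I expect is the upper bound for the smeared law near the endpoints $m_K\pm\varepsilon$. The fix is the positivity of the gap near $m_K$, which follows from $f>0$ almost everywhere.
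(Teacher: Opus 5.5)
Your proof is correct and follows the paper's route: the same call-function margin argument gives $G_0\cx G_K\cx F$, you approximate $G_K$ weakly by absolutely continuous targets in the band, and you conclude with the closedness of $\mathcal A(\pi)$ from Claim~\ref{clm:closedness}. The only difference is the approximating family. The paper contracts affinely, sending $\theta\in K$ to $m_K+\varepsilon(\theta-m_K)$, which gives $G_K\cx G_{K,\varepsilon}\cx F$ by convexity for every $\varepsilon\in(0,1)$. Your uniform smear instead needs $\varepsilon$ small and a local comparison with the positive gap $C_{F|_K/F(K)}-C_{\delta_{m_K}}$ near $m_K$, which you do supply. Two simplifications: your lower bound for the smeared law follows directly from $G_K\cx G_K^\varepsilon$, and the abandoned alternatives in your first bullet should be deleted.
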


\begin{proof}
Pooling preserves the mean and reduces every convex expectation, so
$G_K\cx F$. To check the lower bound, compare the loss in the call
function with the available gap $d_{\widehat I}$. The loss is zero at
thresholds outside $[u,v]$, where the call payoff is affine on the pool.
Inside, it is at most $(v-u)F(K)<d_{\widehat I}$. Thus pooling cannot
lower the call function below $C_{G_0}$, giving $G_0\cx G_K$.

Since $G_K$ has an atom, the standing assumption, which concerns targets
with densities, does not apply to it directly. We therefore retain a
small amount of variation within the pool and then let it vanish.
Replace $T_K(\theta)$ inside $K$ by
$m_K+\varepsilon(\theta-m_K)$, where $0<\varepsilon<1$, leaving it unchanged
outside. Denote the resulting law by $G_{K,\varepsilon}$.
It has a density, since both pieces of the map are affine with nonzero
slope, and
\[
G_0\cx G_K\cx G_{K,\varepsilon}\cx F,
\qquad G_{K,\varepsilon}\Rightarrow G_K
\quad\text{as }\varepsilon\downarrow0.
\]
The convex orders follow from Jensen and convexity for contraction toward
$m_K$; the arrow denotes weak convergence. Each
$G_{K,\varepsilon}$ belongs to $\mathcal A(\pi)$ by the assumption
of this part of the proof. Claim~\ref{clm:closedness} therefore gives
$G_K\in\mathcal A(\pi)$.
\end{proof}

\paragraph{The reverse conditional mean.}
The restriction that a pooling target imposes on the initial experiment
is expressed through the function
\begin{equation}\label{eq:reverse-mean}
w_\pi(\theta)=\E_\pi[X\mid \theta],
\end{equation}
which averages the receiver's initial posterior mean over the observations
that can occur when the state is $\theta$. It is determined by the initial
experiment. We choose a measurable representative with values in $[a,b]$
everywhere: conditional expectation determines it only $F$-almost
everywhere, and values on the exceptional null set do not affect the
integrals below. Conditioning on $\theta$ will let us use $w_\pi$ to
express the moment restriction on $(X,Y,\theta)$ as an integral over states.
Under interval-partition information, the initial posterior mean equals
$\E_\pi[\theta\mid \theta\in I_j]$ whenever the state lies in $I_j$, so $w_\pi$ is
constant on each disclosed interval. In the continuous example of
Section~\ref{sec:continuous-example}, it is increasing:
$w_\pi(\theta)=\frac12+\frac1{25}(\theta-\frac12)$. The remaining claims show
that, if every target with a density in the band belongs to $\mathcal A(\pi)$,
then $w_\pi$ is constant on each uncertainty interval, and $X$ itself is
constant there as well.

Specifying the distribution $G_K$ alone does not specify which states
give rise to which final means. We must also use $\E[\theta\mid Y]=Y$.
The next claim shows that, among joint laws with marginals $G_K$ and $F$,
this condition allows only $Y=T_K(\theta)$ almost surely. Every
implementation must therefore give final mean $m_K$ to the states in
$K$ and reveal the rest. This determines the final means, although
different message rules may produce them. The resulting restriction on
$w_\pi$ applies to every possible implementation.

\begin{claim}[A necessary identity for pooling targets]\label{clm:local-identity}
Any coupling of $Y\sim G_K$ and $\theta\sim F$ with $\E[\theta\mid Y]=Y$
must satisfy $Y=T_K(\theta)$ almost surely. Consequently, for every admissible
pool $K=(u,v)$,
\begin{equation}\label{eq:local}
\int_u^v(\theta-m_K)w_\pi(\theta)f(\theta)\,d\theta=0.
\end{equation}
\end{claim}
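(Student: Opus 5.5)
The plan has two parts. First I identify the coupling between $Y$ and $\theta$; then I use the weak conditions~\eqref{eq:weak} together with Claim~\ref{clm:pools} to obtain~\eqref{eq:local}.

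\emph{Uniqueness of the coupling.} Let $(Y,\theta)$ satisfy $Y\sim G_K$, $\theta\sim F$, and $\E[\theta\mid Y]=Y$. Since $F$ has a density, the call functions $C_{G_K}$ and $C_F$ agree at every $t\le u$ and every $t\ge v$. At such $t$, the payoff $(\cdot-t)_+$ is affine on $K$, so pooling does not change its expectation.

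I now apply the no-crossing observation from the proof of Claim~\ref{clm:decomposition}, with $A=Y$ and $B=\theta$, at a countable dense set of thresholds $t\notin(u,v)$. Three consequences follow.
\begin{itemize}
\item For $Y$ outside $[u,v]$, no such threshold lies strictly between $Y$ and $\theta$. The thresholds are dense outside $(u,v)$ and $Y$ is not in $[u,v]$, so $\theta=Y$ almost surely there. Equivalently, I can apply the exact rule ``$A=t$ implies $B=t$'' using the atomless part of $G_K$ off $K$.
\item For $Y=m_K\in(u,v)$, no-crossing at $u$ and at $v$ forces $\theta\in[u,v]$.
\item Mass balance finishes the argument. $G_K$ puts mass $F(K)$ at $m_K$, and the states outside $K$ are matched to $Y=\theta$, which uses up all of $F|_{K^c}$. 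Hence $\{Y=m_K\}=\{\theta\in K\}$ up to null sets.
\end{itemize}
Together these give $Y=T_K(\theta)$ almost surely.

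\emph{The identity.} By Claim~\ref{clm:pools}, there is a triple $(X,Y,\theta)$ with $(X,\theta)\sim\pi$ and satisfying~\eqref{eq:weak}. Its $(Y,\theta)$ marginal is a coupling of the kind just treated, so $Y=T_K(\theta)$. The first identity $\E[Y\mid X]=X$ gives $\E[(Y-X)h(X)]=0$ for bounded $h$. Since $\E_\pi[\theta\mid X]=X$, this becomes
\[
\E\bigl[(T_K(\theta)-\theta)h(X)\bigr]=0 .
\]
Here $T_K(\theta)-\theta=(m_K-\theta)\1_K(\theta)$. Taking $h$ constant gives nothing beyond the definition of $m_K$, so instead I take $h(x)=x$, which is bounded on $[a,b]$. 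That yields $\E[(m_K-\theta)\1_K(\theta)X]=0$. Conditioning on $\theta$ and using $w_\pi$ from~\eqref{eq:reverse-mean} turns this into~\eqref{eq:local}.

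I expect the main obstacle to be the uniqueness step. Specifically, it is showing that mass at $Y=m_K$ cannot draw on states outside $K$, and that $Y$ off $K$ exactly reveals $\theta$, given that no-crossing gives only inequalities at single thresholds. I will handle this as in Claim~\ref{clm:decomposition}: use a countable dense set of thresholds in $[a,u]\cup[v,b]$ and then the mass-balance count above.
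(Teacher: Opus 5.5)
Your proposal is correct. The second half (take $h(x)=x$ in $\E[(Y-X)h(X)]=0$, subtract $\E_\pi[\theta\mid X]=X$, and condition on $\theta$) is exactly the paper's computation $\E[X\theta]=\E[X^2]=\E[XY]$.

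The uniqueness step takes a different route. The paper uses a single test function $\phi(\theta)=(u-\theta)_+^2+(\theta-v)_+^2$, which vanishes on $[u,v]$ and is strictly convex off it. Equal expectations under $G_K$ and $F$ force the Jensen gap $\phi(\theta)-\phi(Y)-\phi'(Y)(\theta-Y)$ to vanish almost surely. That one fact gives both $\theta=Y$ when $Y\notin[u,v]$ and $\theta\in[u,v]$ when $Y\in[u,v]$. You instead reuse the no-crossing lemma from Claim~\ref{clm:decomposition} at countably many thresholds in $[a,u]\cup[v,b]$. The two are closely related. $\phi$ is twice the integral of the put payoffs $(t-\theta)_+$ over $t\le u$ plus the call payoffs $(\theta-t)_+$ over $t\ge v$. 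It therefore packages all your thresholds into one convex function and removes the need for a countable dense set with a single exceptional null event. Your version gains reuse of existing machinery at the cost of that bookkeeping.

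Two small remarks. First, the mass-balance step is redundant: your first two bullets already give $\{Y\in[u,v]\}=\{\theta\in[u,v]\}$ almost surely. Second, the ``equivalently'' aside does not work as stated. For each $t$ off $[u,v]$ the event $\{Y=t\}$ is null, so the rule ``$A=t$ implies $B=t$'' says nothing there. It also cannot be applied at uncountably many $t$ outside one null set. Drop the aside and keep the dense-set argument.
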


\begin{proof}
We need a convex function with two properties: pooling inside $K$
leaves its expectation unchanged, and equality in Jensen when
$Y\notin[u,v]$ is possible only if $\theta=Y$.
Choose one that is flat on $[u,v]$ and strictly convex on either side:
the continuously differentiable function
\[
\phi(\theta)=(u-\theta)_+^2+(\theta-v)_+^2.
\]
Since $\phi$ vanishes on $[u,v]$, where the two laws differ, its
expectations under them agree. Because $\E[\theta\mid Y]=Y$, the
nonnegative variable $\phi(\theta)-\phi(Y)-\phi'(Y)(\theta-Y)$ therefore
has expectation zero and vanishes almost surely. Thus $\theta$ must
be a point where the tangent at $Y$ touches the graph of $\phi$.
If $Y$ lies outside $[u,v]$, the only such point is $Y$, so $\theta=Y$.
If $Y$ lies inside $[u,v]$, its tangent is the zero line, which touches
the graph exactly on $[u,v]$; hence $\theta\in[u,v]$. There is no endpoint mass,
and the only target value inside is $m_K$. Hence precisely the states
in $K$ must receive $m_K$; everything else is revealed.

Now impose the prescribed initial law. By Claim~\ref{clm:pools}, choose
a triple with $(X,\theta)\sim\pi$, $\Law(Y)=G_K$, and~\eqref{eq:weak}.
Both the state and the final mean have conditional mean $X$ given $X$,
so $\E[X\theta]=\E[X^2]=\E[XY]$. Since $Y=T_K(\theta)$ almost surely,
$\theta-Y$ is a function of the state. Conditioning on $\theta$ gives
\[
0=\E[X(\theta-Y)]
=\int_K(\theta-m_K)w_\pi(\theta)\,F(d\theta),
\]
which is~\eqref{eq:local}.
\end{proof}

Condition~\eqref{eq:local} says that the state and the reverse conditional
mean are uncorrelated within every admissible pool. To see what this
rules out, suppose $w_\pi$ were strictly increasing there. Pooling lowers final
means for states above $m_K$ and raises them for states below $m_K$.
With $w_\pi$ increasing, states above $m_K$ come with higher initial
means on average, so pooling would reduce $\E[XY]$ below $\E[X\theta]$,
violating the martingale identities.
The next claim turns this intuition into a constancy result without
assuming that $w_\pi$ is monotone or continuous.

\begin{claim}[The reverse conditional mean is constant on each uncertainty interval]
\label{clm:reverse-constancy}
For each uncertainty interval $I$, there is a constant $c_I$ such that
$w_\pi(\theta)=c_I$ for Lebesgue-almost every $\theta\in I$.
\end{claim}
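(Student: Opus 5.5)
The plan is to read \eqref{eq:local} as a family of covariance identities and extract constancy of $w_\pi$ in two stages. First, differentiate in the right endpoint to show that $w_\pi$ agrees almost everywhere with a $C^1$ function on short windows. Second, use a Chebyshev-type association argument to show that this smooth version has zero derivative. Finally, patch the windows together along $I$.

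Fix a nondegenerate compact $\widehat I\subset I$ with its gap $d=d_{\widehat I}$, and fix $u$ in the interior of $\widehat I$. For $v\in(u,u+d)$ with $[u,v]\subset\widehat I$, put
\[
\Phi(v)=\int_u^v f,\quad \Psi(v)=\int_u^v\theta f,\quad \Lambda(v)=\int_u^v w_\pi f,\quad \Xi(v)=\int_u^v\theta w_\pi f .
\]
Since $m_K=\Psi/\Phi$, identity \eqref{eq:local} for $K=(u,v)$ reads $\Xi\Phi=\Psi\Lambda$. All four functions are absolutely continuous in $v$, because $w_\pi$ is bounded and $f$ is integrable. Differentiating at Lebesgue points and dividing by $f(v)>0$, which holds for almost every $v$, gives
\[
w_\pi(v)\,\bigl(v\Phi(v)-\Psi(v)\bigr)=v\Lambda(v)-\Xi(v)\qquad\text{for a.e. }v .
\]
The factor $N_d(v):=v\Phi-\Psi=\int_u^v(v-\theta)f\,d\theta$ is positive for $v>u$. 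Hence on $(u,u+d)$ the function $w_\pi$ equals almost everywhere the ratio $\tilde w=N/N_d$, where $N=v\Lambda-\Xi$. Both $N$ and $N_d$ are $C^1$, with $N'=\Lambda$ and $N_d'=\Phi$. So $\tilde w$ is $C^1$ on the open window. Because it agrees with $w_\pi$ almost everywhere, it can replace $w_\pi$ in \eqref{eq:local} for every admissible pool inside the window.

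Next I show $\tilde w'\equiv0$ on the window. Suppose instead that $\tilde w'(x)\neq0$ at some $x$. By continuity, $\tilde w$ is then strictly monotone on a small interval $K\ni x$, which is an admissible pool. Let $\theta,\theta'$ be independent draws from $F_K=F|_K/F(K)$. Then
\[
\int_K(\theta-m_K)\tilde w\,dF=\tfrac12F(K)\,\E\bigl[(\theta-\theta')(\tilde w(\theta)-\tilde w(\theta'))\bigr].
\]
The integrand on the right has a fixed strict sign whenever $\theta\neq\theta'$, and $\theta\neq\theta'$ with positive probability because $F_K$ has a density. This contradicts \eqref{eq:local}, so $\tilde w$ is constant on each window $(u,u+d)$.

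Finally, the windows $(u,u+d)$ with $u$ ranging over the interior of $\widehat I$ overlap in sets of positive measure. Consecutive constants must therefore agree, and $w_\pi$ is almost everywhere constant on $\widehat I$. Exhausting $I$ by an increasing sequence of compact subintervals yields a single constant $c_I$.

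I expect the main obstacle to be the first step, namely justifying the a.e.\ differentiation of the product identity and the division by $f$. This needs $f>0$ a.e. and a common full-measure set of Lebesgue points for $f$, $\theta f$, $w_\pi f$ and $\theta w_\pi f$; that set must be taken for each fixed $u$, not uniformly in $u$. Once a continuous version of $w_\pi$ is in hand, the association argument and the patching are routine.
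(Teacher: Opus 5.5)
Your proof is correct, and after the first differentiation it takes a different route from the paper.

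The paper differentiates the pool identity in \emph{both} endpoints, at points of one common full-measure set $E$ of Lebesgue points of $f,\theta f,w_\pi f,\theta w_\pi f$ where $f>0$. This gives $w_\pi(v)=\bar w$ and $w_\pi(u)=\bar w$, hence $w_\pi(u)=w_\pi(v)$ for any two points of $E$ closer than $d_{\widehat I}$. Chaining along the dense set $E$ then finishes, without ever producing a regular version of $w_\pi$. Lebesgue points depend only on the integrands, so this set can be chosen uniformly in $u$; your caveat is unnecessary but harmless.

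You differentiate only in the upper endpoint, extract the $C^1$ version $\tilde w=N/N_d$ on each window, and then rule out $\tilde w'\neq0$ with the Chebyshev covariance identity. That second step is exactly the heuristic the paper states before the claim: a strictly monotone $w_\pi$ would make pooling move $\E[XY]$ away from $\E[X\theta]$. The paper avoids that heuristic because it does not know $w_\pi$ is continuous; your regularity step is what makes it legitimate. The paper's route is shorter. Yours exhibits the economic mechanism and shows along the way that $w_\pi$ agrees a.e.\ with a $C^1$ function on short windows.

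Your first step nearly suffices on its own. Using $\Xi\Phi=\Psi\Lambda$ once more gives $N=\Lambda(v-m_K)$ and $N_d=\Phi(v-m_K)$, so $\tilde w=\Lambda/\Phi$. Then $(\Lambda/\Phi)'=f\,(w_\pi-\Lambda/\Phi)/\Phi=0$ a.e.\ on the window, and this locally absolutely continuous ratio is constant without the association argument.

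A cosmetic point: the window should be $(u,\min\{u+d,\max\widehat I\})$ rather than $(u,u+d)$.
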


\begin{proof}
We will show that the values of $w_\pi$ at the two endpoints of a short
pool both equal its average within that pool. They must therefore equal
each other. Joining nearby endpoints will then give constancy throughout
an uncertainty interval.

Fix a nondegenerate compact interval $\widehat I\subset I$. Every pool
inside it of length less than $d_{\widehat I}$ is admissible, so the
identity from Claim~\ref{clm:local-identity} holds as we move its endpoints
slightly. To differentiate this identity, write its terms as
\[
\begin{aligned}
A_0&=\int_u^v f(\theta)\,d\theta,&A_1&=\int_u^v \theta f(\theta)\,d\theta,\\
B_0&=\int_u^v w_\pi(\theta)f(\theta)\,d\theta,&B_1&=\int_u^v \theta w_\pi(\theta)f(\theta)\,d\theta.
\end{aligned}
\]
These quantities depend on $u,v$. Here $A_0$ is the pool's probability,
$m_K=A_1/A_0$ its mean state, and $\bar w=B_0/A_0$ its average of
$w_\pi$. Claim~\ref{clm:local-identity} becomes
\[
B_1=\frac{A_1B_0}{A_0}.
\]

We differentiate the integrals, not $w_\pi$ itself, so continuity of
$w_\pi$ is unnecessary. The four integrands are integrable; their
indefinite integrals are therefore absolutely continuous. Choose a
common full-measure set $E\subset I$ of Lebesgue points of all four
integrands where $f>0$. At endpoints in $E$, moving an upper integration
limit has derivative equal to the integrand there; moving a lower limit
has the opposite derivative. Using this same set for both endpoints
will let us compare and join the resulting equalities.

Take $u<v$ in $E$ and in the interior of $\widehat I$, with
$v-u<d_{\widehat I}$. Moving the upper endpoint exposes $w_\pi(v)$.
The product and quotient rules give
\[
\frac{\partial B_1}{\partial v}=v w_\pi(v)f(v),
\qquad
\frac{\partial}{\partial v}\left(\frac{A_1B_0}{A_0}\right)
=f(v)\bigl[v\bar w+m_Kw_\pi(v)-m_K\bar w\bigr].
\]
Equating these derivatives and collecting terms yields
\[
(v-m_K)f(v)\bigl(w_\pi(v)-\bar w\bigr)=0.
\]
Because $f>0$ almost everywhere on the pool, its mean lies strictly
between its endpoints: $u<m_K<v$. Also $f(v)>0$ by our choice of $E$.
Thus $w_\pi(v)=\bar w$. Moving the lower endpoint gives the other
endpoint value in exactly the same way:
\[
(m_K-u)f(u)\bigl(w_\pi(u)-\bar w\bigr)=0,
\qquad\text{hence}\qquad w_\pi(u)=\bar w=w_\pi(v).
\]

This equality initially compares only nearby points. Since $E$ has full
measure, it is dense. Any two of its points in the interior of
$\widehat I$ can be joined by a finite increasing sequence of points in
$E$ whose successive gaps are less than $d_{\widehat I}$. Applying the
equality to each consecutive pair shows that $w_\pi$ is constant on
$E\cap\operatorname{int}(\widehat I)$, hence almost everywhere on
$\widehat I$. Finally, take an increasing sequence of such compact
intervals covering $I$. Their constants agree on their overlaps, which
have positive length. This gives a single constant $c_I$ almost
everywhere on $I$.
\end{proof}

\begin{claim}[Each uncertainty interval has one initial mean]\label{clm:collapse}
For every uncertainty interval $I$, let $p=F(I)$ and
$k=\int_I \theta\,F(d\theta)/p$. Then $X=k$ almost surely on $\{\theta\in I\}$.
\end{claim}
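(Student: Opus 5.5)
The plan is to compute $\E_\pi[X^2\1\{\theta\in I\}]$ in two ways and conclude that the conditional variance of $X$ on $\{\theta\in I\}$ vanishes. Claim~\ref{clm:decomposition} gives $\{X\in I\}=\{\theta\in I\}$ almost surely, so this event is measurable with respect to $X$ up to a null set. Call it $\Omega_I$; it has probability $p$.

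First, I identify the constant $c_I$ from Claim~\ref{clm:reverse-constancy}. Since $w_\pi=\E_\pi[X\mid\theta]$ equals $c_I$ for almost every $\theta\in I$, and $F$ has a density,
\[
\E_\pi[X\1_{\Omega_I}]=\E_\pi[w_\pi(\theta)\1\{\theta\in I\}]=c_Ip.
\]
Because $\Omega_I$ is $X$-measurable and $\E_\pi[\theta\mid X]=X$,
\[
\E_\pi[X\1_{\Omega_I}]=\E_\pi[\theta\1_{\Omega_I}]=pk.
\]
Hence $c_I=k$.

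Second, I compute the second moment in two ways. Using $\E[\theta\mid X]=X$ and the $X$-measurability of $\Omega_I$,
\[
\E_\pi[X^2\1_{\Omega_I}]=\E_\pi[X\theta\1_{\Omega_I}].
\]
Conditioning instead on $\theta$, with $\Omega_I=\{\theta\in I\}$ a function of the state,
\[
\E_\pi[X\theta\1\{\theta\in I\}]=\E_\pi[\theta\,w_\pi(\theta)\1\{\theta\in I\}]=k\cdot pk=pk^2.
\]
Combining the two displays gives $\E[(X-k)^2\1_{\Omega_I}]=\E[X^2\1_{\Omega_I}]-2k\cdot pk+pk^2=0$. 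So $X=k$ almost surely on $\{\theta\in I\}$. Boundedness on $[a,b]$ makes all these moments finite, so no integrability issues arise.

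The only delicate point is the interchange of conditioning: passing from $X$ to $\theta$ requires the event to be measurable with respect to both variables. That is exactly what Claim~\ref{clm:decomposition} supplies. The claim also supplies that the null set where $w_\pi\neq c_I$ has $F$-probability zero, which follows from the absolute continuity of $F$. Once these are in place, the argument is a short computation, and I expect no real obstacle.
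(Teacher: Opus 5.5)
Your proposal is correct and follows essentially the same route as the paper's proof. Like the paper, you use the event $\{X\in I\}=\{\theta\in I\}$ from Claim~\ref{clm:decomposition} to show $c_I=k$ from the first moment. You then compute $\E[X^2\1_B]=\E[X\theta\1_B]=pk^2$ by conditioning once on $X$ and once on $\theta$, which gives zero conditional variance.
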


\begin{proof}
Constancy of $w_\pi$ says that initial means have the same average at
almost every state in $I$. To show that $X$ itself is constant there, we use
the martingale property to show that its conditional variance is zero.
By Claim~\ref{clm:decomposition}, the event
$B=\{X\in I\}$ equals $\{\theta\in I\}$ almost surely. We can therefore
condition on either $X$ or $\theta$ while keeping the event $B$ fixed.
With $\1_B$ denoting its indicator, the initial martingale and
Claim~\ref{clm:reverse-constancy} give
\[
\E[X\1_B]=\E[\theta\1_B]=pk,
\qquad \E[X\1_B]=pc_I,
\]
so $c_I=k$. The same identities determine the second moment:
\[
\E[X^2\1_B]=\E[X\theta\1_B]
=\E[w_\pi(\theta)\theta\1_B]=pk^2.
\]
Consequently $\E[(X-k)^2\1_B]=0$, which proves the claim.
\end{proof}

\begin{proof}[Completion of the proof of Theorem~\ref{thm:main}]
Claim~\ref{clm:collapse} holds on every uncertainty interval, while
Claim~\ref{clm:decomposition} gives $X=\theta$ on $R$. Since there are at most
countably many uncertainty intervals, these conclusions hold simultaneously outside
a single null event. Thus $\pi$ has the interval-partition
form~\eqref{eq:partition}, with the uncertainty intervals as its
intervals.

We have shown that if every absolutely continuous target in the band
belongs to $\mathcal A(\pi)$, then $\pi$ generates interval-partition
information. Contrapositively, for every other initial experiment some
absolutely continuous target in the band lies outside $\mathcal A(\pi)$,
and hence outside $\mathcal I(\pi)$. Together with
Claim~\ref{clm:sufficiency}, this proves Theorem~\ref{thm:main}.
\end{proof}

The proof also shows where the density assumption is used. The necessity
argument uses positivity of the density only inside the uncertainty
intervals, so it applies whenever $F$ is absolutely continuous and its
density is positive almost everywhere on each uncertainty interval.
Absolute continuity of $F$ remains essential for the conclusion about
targets with densities: by Claim~\ref{clm:decomposition}, every target in
the band agrees with $F$ on the contact set $R$, so a singular part of
$F$ on $R$ would appear in every target.

\subsection{Proof of Corollary~\ref{thm:impossibility}}\label{sec:atomless-proof}

\begin{proof}
Under interval-partition information, each disclosed interval
would put positive mass at its conditional mean, contradicting
atomlessness of $G_0$. With no such interval, $X=\theta$ and $G_0=F$.
Thus atomlessness of $G_0$ together with $G_0\ne F$ rules out
interval-partition information. The necessity argument in
Section~\ref{sec:necessity} therefore supplies an absolutely continuous
target $G\in\mathcal B(G_0,F)\setminus\mathcal A(\pi)$, which is not
implementable because $\mathcal I(\pi)\subseteq\mathcal A(\pi)$.

If the receiver observes a richer signal $S$ with $\E[\theta\mid S]=X$
and the sender observes $S$ as well, any implementation would still
produce a triple satisfying~\eqref{eq:weak} (Section~\ref{sec:richer}).
It would therefore place $G$ in $\mathcal A(\pi)$, a contradiction.
The same target remains excluded under richer initial information.
\end{proof}

\subsection{Proof of Theorem~\ref{thm:excluded-set}}\label{sec:thm2-proof}

The implementable set is closed, so implementable targets cannot converge
to an excluded target. Fix an excluded target and mix in another target from the band. As the
weight on the second target tends to zero, the mixture converges to the
excluded target, uniformly over the choice of the second target.

\begin{proof}
Fix $G_*\in\mathcal B(G_0,F)\setminus\mathcal I(\pi)$; this set is
nonempty by Theorem~\ref{thm:main}, and $\mathcal I(\pi)$ is weakly
closed by Claim~\ref{clm:closedness}.
Mixing two targets in the band keeps the mixture in the band by convexity.
If no uniform $\bar s>0$ existed, there would be $s_n\to0$ and
$G_n\in\mathcal B(G_0,F)$ for which
$(1-s_n)G_*+s_nG_n$ is implementable. For every continuous
$h:[a,b]\to\mathbb R$,
\[
\left|\int h\,d\bigl[(1-s_n)G_*+s_nG_n-G_*\bigr]\right|
\le 2s_n\|h\|_\infty\longrightarrow0.
\]
The mixtures converge weakly to $G_*$, contradicting closedness.

The set $\mathcal B(G_0,F)\setminus\mathcal I(\pi)$ is nonempty by
Theorem~\ref{thm:main} and relatively open in $\mathcal B(G_0,F)$
because $\mathcal I(\pi)$ is closed. Take any Borel probability distribution with support
$\mathcal B(G_0,F)$. Such a distribution assigns positive probability
to every nonempty relatively open subset of the band, including the
excluded set.
\end{proof}

\subsection{Proof of Corollary~\ref{cor:value}}

\begin{proof}
Under interval-partition information, Theorem~\ref{thm:main} gives
$\mathcal I(\pi)=\mathcal B(G_0,F)$, so the values agree for every $U$.
Otherwise, choose an excluded target $G_*$. By
Claim~\ref{clm:closedness}, $\mathcal I(\pi)$ is convex and weakly compact.
Strict separation in the weak topology therefore gives a continuous
$U:[a,b]\to\mathbb R$ such that
\[
\max_{G\in\mathcal I(\pi)}\int U\,dG
<\int U\,dG_*
\leq\max_{G\in\mathcal B(G_0,F)}\int U\,dG.
\]
\end{proof}

\subsection{Verification of the discrete example}\label{sec:verify-discrete}

Use the notation of Section~\ref{sec:discrete-example}, where both
bounds are verified. The following calculation shows why the two
constructions cannot be combined while keeping the prescribed initial
joint law.

Suppose a triple with $(X,\theta)\sim\pi$ and $\Law(Y)=G$
satisfied~\eqref{eq:weak}. Since
$C_G(1/2)=C_F(1/2)=1/8$, equality in Jensen prevents any
crossing of $1/2$. Since the target assigns probability only to $1/4$
and $3/4$, this no-crossing property gives $Y=1/4$ when $\theta<1/2$
and $Y=3/4$ when $\theta>1/2$. But $\Pr(\theta<1/2\mid L)=9/10$, so
\[
\E[Y\mid L]=\tfrac9{10}\tfrac14+\tfrac1{10}\tfrac34
=\frac3{10}\ne\frac{13}{50}.
\]
This contradicts $\E[Y\mid X]=X$ because $X$ identifies $L$.

To obtain an explicit bound on the mixing weight, measure the distance from
the two target values by $\psi(y)=\min\{|y-1/4|,|y-3/4|\}$.
Any implementable $Y$ lies in $[0,3/5]$ after $L$ and in
$[2/5,1]$ after $H$. On these respective supports,
\[
\psi(y)\ge\tfrac37(y-1/4),\qquad
\psi(y)\ge\tfrac37(3/4-y).
\]
The conditional means therefore give $\E[\psi(Y)]\ge3/700$:
every implementable target must stay a positive average distance from
the two proposed values.
Since $\int\psi\,dG=0$ and $\psi\le1/4$ on $[0,1]$, the expectation
of $\psi$ under $(1-s)G+s\eta$ is at most $s/4<3/700$ for
$0<s<3/175$, whatever $\eta\in\mathcal B(G_0,F)$.

\subsection{Verification of the continuous example}\label{sec:verify-continuous}

Integration of \eqref{eq:joint} gives the stated uniform marginals,
$\E[\theta\mid X]=X$, and
\begin{equation}\label{eq:reverse}
\E[X\mid \theta]=\frac12+\frac1{25}(\theta-1/2).
\end{equation}
The conditional density of $\theta$ given $X=x$ is
$1+12(x-1/2)(\theta-1/2)$. For $x_1<x_2$ and $\theta_1<\theta_2$, the product
of the densities at $(x_2,\theta_2)$ and $(x_1,\theta_1)$ exceeds the product
at $(x_2,\theta_1)$ and $(x_1,\theta_2)$ by
$12(x_2-x_1)(\theta_2-\theta_1)>0$. Since the densities are positive, the
density ratio for the higher initial mean is strictly increasing in
the state, as claimed.

\paragraph{The two bounds.}
To obtain the law in \eqref{eq:target-cont}, leave $\theta$ outside
$[1/4,3/4]$ unchanged and map the middle interval to
$1/2+(\theta-1/2)/100$. Contraction toward the conditional mean reduces
every convex expectation, so $G\cx F$.

For the lower bound, let
$\widetilde G=\tfrac12\delta_{2/5}+\tfrac12\delta_{3/5}$.
Convexity gives $G_0\cx\widetilde G$. For $2/5\leq t\leq3/5$, the upper
tail of $G$ alone gives
\[
C_G(t)-C_{\widetilde G}(t)
\geq\int_{3/4}^1(y-t)\,dy-\frac12\left(\frac35-t\right)
=\frac t4-\frac{13}{160}\geq\frac3{160}>0.
\]
For $t\leq2/5$, $C_{\widetilde G}(t)=1/2-t\leq C_G(t)$ by the
common mean; for $t\geq3/5$, $C_{\widetilde G}(t)=0$.
Thus $G_0\cx\widetilde G\cx G\cx F$.

\paragraph{Failure of the weaker conditions.}
Suppose a triple with initial joint law \eqref{eq:joint} and target
\eqref{eq:target-cont} satisfies \eqref{eq:weak}. The convex function
\[
\phi(\theta)=(1/4-\theta)_+^2+(\theta-3/4)_+^2
\]
has equal expectations under $F$ and $G$. The nonnegative
Jensen gap $\phi(\theta)-\phi(Y)-\phi'(Y)(\theta-Y)$ therefore vanishes almost
surely. Thus $\theta$ lies where the tangent to $\phi$ at $Y$ touches
the graph. If $Y$ lies outside $[1/4,3/4]$, the only such point is $Y$,
so $\theta=Y$. If $Y$ lies inside that interval, the tangent is the zero
line, which touches the graph exactly on $[1/4,3/4]$; hence $\theta$
lies there as well. Consequently,
\[
\theta=Y\text{ outside }[1/4,3/4],\qquad
E:=\{\theta\in[1/4,3/4]\}=\{Y\in[1/4,3/4]\}
\quad\text{a.s.}
\]
There is no boundary mass. On $E$, which has probability $1/2$, the
target requires $|Y-1/2|\leq1/400$.

The two identities $\E[\theta\mid X]=\E[Y\mid X]=X$ imply
$\E[(X-1/2)(\theta-Y)]=0$. Since the contribution outside $E$ is zero,
\begin{equation}\label{eq:required-moment}
\E[(X-1/2)(\theta-1/2)\1_E]
=\E[(X-1/2)(Y-1/2)\1_E].
\end{equation}
By \eqref{eq:reverse}, the left side equals
\[
\frac1{25}\int_{1/4}^{3/4}(\theta-1/2)^2\,d\theta=\frac1{2400}.
\]
The absolute value of the right side is at most
\[
\frac1{10}\cdot\frac1{400}\cdot\frac12=\frac1{8000}<\frac1{2400},
\]
contradicting \eqref{eq:required-moment}.

\subsection{Proof of Theorem~\ref{thm:auction}}\label{sec:auction-proof}

\paragraph{The bounds in quantile form.}
The auction proof expresses revenue in terms of bid quantiles. We first
translate the two bounds into this representation.
Using the cumulative distribution and quantile notation introduced in
Section~\ref{sec:auctions}, define the \emph{integrated quantile}
\[
V_\tau(q)=\int_0^q\tau^{-1}(u)\,du,\qquad 0\leq q\leq1.
\]
The band is $V_F\leq V_G\leq V_{G_0}$. At a contact threshold
$t$, all three cumulative probabilities equal $q_t=F(t)$, and
\[
V_F(q_t)=V_G(q_t)=V_{G_0}(q_t).
\]
Here $t$ is a state threshold and $q_t$ is the probability below it.
An uncertainty interval $I_j=(\ell_j,r_j)$ corresponds to a probability
segment of length $p_j$, across which each integrated quantile increases
by $p_jk_j$. The two bounding curves agree at the endpoints of that
segment. Changing the target within the band can change the curve
between those endpoints, but not the interval's probability or its
conditional mean.

\begin{proof}
The argument of \citet[Appendix, proof of Theorem~1]{SorokinWinter}
expresses revenue as a linear function of the integrated quantile.
The sign of its coefficient determines where full disclosure or pooling
raises revenue. We apply this argument separately within each initial
valuation interval, then optimize the cutoff in the interval where the
coefficient changes sign.

Let $G$ be the common law of final posterior means. Independence
and risk neutrality make bidding that mean optimal. Represent the bids
as $G^{-1}(U_i)$ for independent uniforms $U_i$. The second-highest
$U_i$ has density $n(n-1)u^{n-2}(1-u)$, giving
\begin{equation}\label{eq:auction-revenue}
\operatorname{Rev}_n(G)=n(n-1)\int_0^1u^{n-2}(1-u)G^{-1}(u)\,du.
\end{equation}
This representation includes bid distributions with atoms. For $n\geq3$,
integration by parts yields
\begin{equation}\label{eq:auction-integrated-revenue}
\operatorname{Rev}_n(G)=n(n-1)^2\int_0^1
(u-\rho_n)u^{n-3}V_G(u)\,du.
\end{equation}
The boundary terms vanish. The coefficient of $V_G$ is negative
below $\rho_n$ and positive above it.

\medskip\noindent\emph{Disclosure within an initial interval.}
By~\eqref{eq:feasible}, the conditional bid distributions can be chosen
independently across the initial intervals. Their supports remain in
those intervals, so their positions in the common bid quantile are fixed.
For one such quantile interval $[\alpha,\beta]$, write $V=V_G$. The initial integrated quantile on this interval is
the chord joining $(\alpha,V_F(\alpha))$ to $(\beta,V_F(\beta))$.
Feasibility therefore requires $V$ to be convex, to agree with $V_F$ at
both endpoints, and to lie between $V_F$ and this chord. Conversely,
these conditions describe the conditional convex-order constraint.
The density assumption makes $F^{-1}$ continuous and strictly
increasing, so $V_F$ is continuously differentiable and strictly convex.

If $\beta\leq\rho_n$, the coefficient in
\eqref{eq:auction-integrated-revenue} is nonpositive throughout the
interval, and choosing $V=V_F$ maximizes its contribution. This is full
disclosure. If $\alpha\geq\rho_n$, choosing the upper chord maximizes
the contribution. This leaves the initial interval pooled.

Suppose $\alpha<\rho_n<\beta$. For any feasible $V$, let $L$ be the
line through $(\rho_n,V(\rho_n))$ and $(\beta,V_F(\beta))$.
Convexity gives $L\leq V$ to the left of $\rho_n$ and $L\geq V$
between $\rho_n$ and $\beta$. Replacing $V$ by $\max\{V_F,L\}$ therefore lowers the integrated quantile
where the revenue coefficient is negative and raises the integrated
quantile where the coefficient is positive. The replacement is convex and
stays between $V_F$ and the upper chord, with the same endpoint values.
It is thus feasible and weakly increases revenue.

Strict convexity of $V_F$ implies that $L$ first meets $V_F$ at some
$c\in[\alpha,\rho_n]$ and remains above it until $\beta$.
Consequently the replacement has the form
\begin{equation}\label{eq:auction-chord}
V_c(u)=\begin{cases}
V_F(u),&\alpha\leq u\leq c,\\
V_F(c)+\dfrac{V_F(\beta)-V_F(c)}{\beta-c}(u-c),&c<u\leq\beta.
\end{cases}
\end{equation}
The derivative of $V_c$ is the valuation quantile below $c$ and a
constant above $c$. To implement the bid law described by $V_c$, reveal
valuations with quantiles below $c$ and pool the remaining valuations in
this initial interval. The geometric
replacement establishes that some rule of this form is optimal among
all feasible rules, including randomized ones.

\medskip\noindent\emph{Choosing the cutoff.}
Let $\mathcal R(c)$ denote the contribution of $[\alpha,\beta]$ to
\eqref{eq:auction-integrated-revenue} under~\eqref{eq:auction-chord}.
Differentiating gives
\begin{equation}\label{eq:auction-cutoff-derivative}
\mathcal R'(c)=n(n-1)^2
\frac{V_F(\beta)-V_F(c)-(\beta-c)V_F'(c)}{(\beta-c)^2}
\kappa_n(c,\beta),
\end{equation}
where
\[
\kappa_n(c,\beta)=\int_c^\beta(u-\rho_n)u^{n-3}(u-\beta)\,du.
\]
The moving-boundary terms cancel because the two pieces of $V_c$
agree at $c$. Differentiation is valid on $c<\beta$ because $V_F$ is
continuously differentiable and the integrands and their derivatives
are bounded on compact subintervals. Strict convexity makes the
fraction in~\eqref{eq:auction-cutoff-derivative} positive.

For every $\beta>\rho_n$,
\[
\kappa_n(0,\beta)=
\frac{\beta^{n-1}[n-(n-1)\beta]}{n(n-1)^2}>0,
\qquad \kappa_n(\rho_n,\beta)<0,
\]
and
\[
\frac{\partial \kappa_n}{\partial c}
=-(c-\rho_n)c^{n-3}(c-\beta)<0
\qquad(0<c<\rho_n).
\]
Thus~\eqref{eq:auction-cutoff} has exactly one root $t_n(\beta)$
in $(0,\rho_n)$. Revenue increases up to this root and decreases
thereafter on $[\alpha,\rho_n]$. If the root lies below $\alpha$,
the constrained optimum is $\alpha$. This proves
\eqref{eq:auction-cell-cutoff} for an interval crossing $\rho_n$;
for an interval above $\rho_n$, the same formula gives $q_j=\alpha_j$.

Only one of the disjoint quantile intervals $(\alpha_j,\beta_j)$ can
contain $\rho_n$. Combining
the policies therefore gives the stated rule, with no change on
singleton cells. Each conditional bid law is below its conditional
valuation law in convex order, so the combined rule is feasible
by~\eqref{eq:feasible}.
The argument also covers countably many intervals: valuations are
bounded, the revenue integrals converge absolutely, and the rule is
measurable because the collection of nonsingleton cells is countable.

Finally, for $n=2$, integration by parts in~\eqref{eq:auction-revenue}
gives $\operatorname{Rev}_2(G)=2\int_0^1V_G(u)\,du$. Since
$V_G\leq V_{G_0}$, the initial bid distribution $G_0$ maximizes
revenue and is implemented by providing no further information.
\end{proof}

\clearpage
\begingroup\small\raggedright
\setlength{\bibsep}{3pt plus 1pt}

\endgroup
\end{document}